\newcommand{\mf}[1]{{\mathfrak #1}}
\newcommand{\e}{\emph}
\newcommand{\us}{\underset}
\newcommand{\os}{\overset}
\newcommand{\mc}[1]{{\mathcal #1}}
\newcommand{\bb}[1]{{\mathbb #1}}
\renewcommand{\phi}{\varphi}
\renewcommand{\tilde}{\widetilde}
\begin{document}
	
	\title*{Geometrical structures of the instantaneous current and their macroscopic effects: vortices and perspectives in non-gradient models}
	\author{Leonardo De Carlo}
	\institute{Leonardo De Carlo \at Instituto Superior Tecnico,
		Avenida Rovisco Pais 1,   1000-189 Lisboa,  \email{leonardo_d3_carlo@protonmail.com}}
	
	\titlerunning{Geometrical structures of the instantaneous current and their macroscopic effects}
	
	%
	%
	\maketitle

	\abstract{First we discuss  the definition  of the instantaneous current in interacting particle systems,  in particular   in mass-energy systems and we point out its role in the derivation of the hydrodynamics.  Later we present some geometrical structures of the instantaneous current when the rates of stochastic models satisfy a very common symmetry. These structures give some new idea in non-gradient models and show a new phenomenology in  diffusive interacting particle systems. Specifically, we introduce models with vorticity and present some new perspectives on the link between the Green-Kubo's formula and the hydrodynamics of non-gradient models.}

	\keywords{Stochastic lattice gases, non-gradient models, discrete Hodge decomposition}
	\smallskip
	
	\noindent{\em AMS 2010 Subject Classification}:
	60K35, 82C22, 82C20 
	
	\section{Introduction and Results}
	\label{sec:1}
	When many interacting particles are modeled by Newton's equations the rigorous derivation of hydrodynamics equations, consisting in some PDEs and describing the evolution of thermodynamic quantities,  is often a too optimistic programme, mainly  because of the lack of good ergodic property of the system. To overcome the mathematical problem two assumptions are traditionally made: or modeling the problem with a  stochastic microscopic evolution or assuming a low density of particles.  In the present framework we are interested in the first assumption and we are not having a complete rigorous point of view. For a rigorous and didactic treatment traditional  references  are \cite{KL99,Spohn}. The microscopic dynamics consists of random walks of particles on a lattice $ V_N $ that are  constrained to some rule expressing the local interaction, these are the so-called \textit{interacting particle systems} introduced by Spitzer \cite{Spit70}.

	\medskip
	
	In this paper we focus on the instantaneous current which is the bridge from the microscopic description to the macroscopic description of interacting particle systems. In section \ref{sec:definitions} we give some definitions  that we will use trough all the paper. In sections \ref{sec:part models} and \ref{sec:en models}  we present the models and describe the instantaneous current, in particular its definitions it is clarified in  in mass-energy systems like KMP \cite{DG2,KMP}. In section \ref{disch}  we recall the functional Hodge decomposition obtained in \cite{DG1} in dimension one and two  and we apply it to some interacting particle models. The expert reader can skip the first four sections and section 6, where well known notions of the literature are presented with a general flavour, and refer these sections  just for notation if necessary. 
	
	In the work the attention is  on diffusive models. Some new models with vorticity are introduced in section \ref{sec:slv}. After reviewing the qualitative theory of scaling limits in diffusive systems in section \ref{sec:SL}, in section \ref{sec:slv},  for the first time we study the macroscopic consequences of this decomposition. This leads us to some new phenomenology in particle systems, that is we show in a non-rigorous way that the hydrodynamics of the macroscopic current can present zero divergence terms that are not observed in the hydrodynamics of the density. This extend the usual Fick's law \eqref{eq:hdconJ} to the a new picture \eqref{eq:JasymD} where the diffusion matrix is  a positive non-symmetric matrix.

	In diffusive non gradient systems a derivation of the hydrodynamics with an explicit diffusion coefficient is an open problem.  The relative PDEs are in term of a variational expression of the diffusion coefficient equivalent to the Green-Kubo's formula, see \cite{N,Spohn}. In the last section \ref{sec:GK}, we try to give some perspectives coming from our Hodge decomposition. We give a possible explicit description of the minimizer of Green-Kubo's formula using our functional Hodge decomposition and describe a scheme that connects this minimizer with an explicit hydrodynamics.
	
	\section{Definitions}\label{sec:definitions}
	
	Interacting particle systems are stochastic models evolving  on a lattice along a continuous time Markov dynamics. 
	For the purposes of the paper, we are going to consider only periodic boundary conditions for the lattice where particles move, i.e. the set of vertices $ V_N $ of the lattice will be the  \e n-dimensional discrete torus $ \mathbb{T}^n_N = \mathbb{Z}^n/N \mathbb{Z}^n $ or $  \mathbb{T}^n_{\varepsilon} = \varepsilon\mathbb{Z}^n/N \mathbb{Z}^n $, where $ \varepsilon=1/N $ along the space scale we want to consider. We denote with   $\mathcal E_N$ the set of all couples of vertices $\{x,y\}$ of $ V_N$ such that $ y=x\pm \delta\, e_i $ where $ e_i $ is the canonical versor in $ \mathbb{Z}^n $ along the direction $ i $ and $ \delta$ is equals to $ 1 $ on $ \mathbb{T}^n_N $ and to $ 1/N $ on $ \mathbb{T}^n_\varepsilon$. The elements of $ \mc E_N $ are named  non-oriented edges or simply edges. In this way we have an non-oriented graph $ ( V_N,\mc E_N) $. To every non-oriented graph $( V_N, \mathcal E_N)$ we associate canonically an oriented graph $( V_N,E_N)$ such that the set of oriented edges $E_N$ contains all the ordered pairs $(x,y)$ such that $\{x,y\}\in\mathcal E_N$. Note that if $(x,y)\in E_N$ then also $(y,x)\in E_N$. If $e=(x,y)\in E_N$ we denote
	$e^-:=x$ and $e^+:=y$ and we call $ \mf e:=\{x,y\} $ the  non-oriented edge.  
	
	The microscopic configurations of our particle models are given by the collection of variables $ \eta(x) $ representing the number of particles, the energy or mass at $ x\in V_N $ along the model. When the variables $ \eta(x) $ are discrete we interpret them as number particles and  when continuous as  mass-energy. Calling $\Sigma $ the state space at $ x $ we define   the configuration state space as $\Sigma_N:= \Sigma^{ V_N}$. 
	The microscopic dynamics is a Markov process $ \{\eta_t\}_{t\in\mathbb{R}} $  where particles or masses   interact along  rules encoded in the generator  $ \mc{L}_N $, i.e.
	\begin{equation}\label{eq:gen}
	\mc{L}_Nf(\eta)=\sum_{\eta'\in\Sigma_N}c(\eta,\eta')[f(\eta')-f(\eta)],
	\end{equation}
	where $ f $ is an observable and $ c(\eta,\eta') $ the transition rates from $ \eta $ to $ \eta' $.
	
	Let $ \tau_z $ be the shift by $ z $ on $ \mathbb{Z}^n $ defined by the relation $ \tau_z\eta(x):=\eta(x-z) $ with $ z\in\mathbb{Z}^n $ and for a function $ h:\eta\to h(\eta)\in \mathbb{R} $ we define $\tau_z h(\eta):=h(\tau_{-z}\eta)  $, moreover for a domain $ B\subseteq V_N $ we define  $ \tau_z B:=B+z $. A function $h:\Sigma_N\to \mathbb R$ is called \e{local} if it depends only trough the configuration in  a finite domain $B\subset V_N $ denoted $ D(f) $. Let $ [\cdot]_+  $ be the positive part function.

	\section{Particle models and instantaneous current}\label{sec:part models}
	
	We treat only  nearest neighbour conservative dynamics, that is \eqref{eq:gen} becomes

	\begin{equation}\label{eq:discdyn}
	\mathcal L_N f(\eta)=\sum_{(x,y)\in E_N}c_{x,y}(\eta)\left(f(\eta^{x,y})-f(\eta)\right),\,\,\,\,\eta^{x,y}(z):=\left\{
	\begin{array}{ll}
	\eta(x)-1 & \textrm{if}\ z=x \\
	\eta(y)+1 & \textrm{if}\ z=y\\
	\eta(z) & \textrm{if}\ z\neq x,y
	\end{array}
	\right..
	\end{equation}
	We  study  \emph{translational covariant models}, i.e.  $c_{x,x\pm e^{(i)}}(\eta)=\tau_xc_{0,\pm e^{(i)}}(\eta)$   $\,\,\forall x\in V_N $.

	\subsection{Exclusion process and the 2-SEP}\label{ss:expr} In an \emph{exclusion process }particles move according to a conservative dynamics of  independent random walks with the exclusion rule that there cannot be more than one particle in a single lattice site (hard core interaction).
	The  rates of \eqref{eq:discdyn} have the general form
	\begin{equation}\label{eq:EPrate}
	c_{x,y}(\eta)=\eta(x)(1-\eta(y))\tilde c_{x,y}(\eta),
	\end{equation}
	where $ \tilde c_{x,y}(\eta) $ is the jump rates when  $ \eta $ has a particle in $ x $ and an empty site in $ y $.
	
	The next example of \eqref{eq:discdyn} is the \emph{2-SEP} (2-simple exclusion process), in this model the interaction is simply hardcore but  in every site there can be at most $ 2 $ particles. The state space is $ \Sigma_N=\{0,1,2\} $ and  the dynamics is defined by 
	\begin{equation}\label{eq:2SEPgen}
	\mathcal L_N^\textrm{2-SEP}f(\eta)=\sum_{(x,y)\in E_N}c_{x,y}(\eta)\left(f(\eta^{x,y})-f(\eta)\right),\,\,\, c_{x,y}(\eta)=\chi^+(\eta(x))\chi^-(\eta(y)),
	\end{equation}
	where $\chi^+(\alpha)=1$ if $\alpha >0$ and zero otherwise while $\chi^-(\alpha)=1$ if $\alpha<2$ and zero otherwise.

	\subsection{Instantaneous  current in particle systems}\label{ss:Istcurrpart}
	
	In interacting particle systems there are deep underlying   geometrical structures that  reflects in the hydrodynamics of lattice models   as we will discuss later, see also \cite{DG1}. The basis is the fact that the instantaneous current is a discrete vector field and closely related to a microscopic mass conservation law leading to the hydrodynamics.

	\begin{definition}
		A discrete vector field is a function $ \phi: E_N \to\mathbb{R} $ that is \emph{antisymmetric}, i.e.
		$ \phi(x,y) = - \phi(y,x) $ for any $ (x,y) \in E_N  $.
	\end{definition}
	The \e{instantaneous current} for our particle models is defined as
	\begin{equation}\label{eq:istcur}
	j_\eta(x,y):=c_{x,y}(\eta)-c_{y,x}(\eta)\,,
	\end{equation}
	which is a discrete vector field for each fixed configuration $ \eta $. The intuitive interpretation of the instantaneous current is the rate at which particles cross the bond $(x,y)$. Let $\mathcal N_t (x,y)$ be the number of particles that jumped from site $x$ to site $y$ up to time $t$. The \e{current flow} across the bond $(x,y)$ up to time $t$ is defined as
	\begin{equation}\label{eq:currver}
	J_t (x,y):=\mathcal N_t (x,y)-\mathcal N_t (y,x)\,.
	\end{equation}
	This is a discrete vector field ($J_t (x,y)=-J_t (y,x) $) depending on the trajectory $ \{\eta_t\}_t $.
	Between the instantaneous current $ j_\eta(x,y) $ and the current flow $ J_t (x,y) $ there is a strict  connection given by the key observation (see for example \cite{Spohn} section 2.3 in part II) that
	\begin{equation}\label{eq:mart}
	M_t (x,y)=J_ t (x,y)-\int_0^tj_{\eta(s)}(x,y)ds
	\end{equation}
	is a martingale. This allows to treat the difference between $ J_ t (x,y)  $ and the integral $ \int_0^t ds\, j_{\eta(s)}(x,y) $ as a  microscopic fluctuation term. It also gives a more physical definition of  $ j_\eta(x,y) $ as follows. Consider an initial configuration $ \eta_0=\eta $, the explicit expression of the instantaneous current can be defined as 
	\begin{equation}\label{eq:flow/t}
	j_\eta(x,y):=\us{ t\to0}{\lim}\frac{{\mathbb{E}^{\eta}(J_{t}(x,y))}}{t}.
	\end{equation}
	The expectation is $ \mathbb{E}^{\eta}(J_t(x,y))=\int\mathbb{P}^\eta(d\{\eta_t\}_t)J_t(x,y) $, where the integration is over all trajectories $ \{\eta_t\}_t $ starting from $ \eta $ at time 0 and $ \mathbb{P}^\eta $ the probability induced by the Markov process.  For a trajectory $\{\eta_t\}_t $ the probability to observe more than one jump goes like $ O(t^2) $, then it is negligible since we are interested in an infinitesimal time interval. Since 
	$ c(\eta,\eta')= \us{t\to 0}{\lim}\frac{\mathbb{P}^\eta(\eta_t=\eta')}{ t}=\us{t \to 0}{\lim}\frac{p_{ t}(\eta,\eta')}{t} $, where $ p_t (\eta,\eta') $ are the transition probability,  when  $ t $  goes to zero $ J_t(x,y) $  takes value $ +1 $ if a jump from $ x $ to $ y $ happens,  $ -1 $ in the opposite case and  $ 0 $ in the other cases. So the current defined in  \eqref{eq:flow/t}  becomes $ j_\eta(x,y)= c_{x,y}(\eta)-c_{y,x}(\eta) $ as in \eqref{eq:istcur}.
	
	The discrete divergence for a discrete vector field $ \phi $ on $ E_N $ is  $ \nabla\cdot\phi(x):=\us{y\sim x}{\sum}\phi(x,y) $, where the sum is on the nearest neighbours $ y\sim x $ of $ x $. For convenience of notation, we will use the symbol $ \nabla\cdot $ both for the discrete case and the continuous one, therefore we recommend to the reader to pay attention about this.  The local microscopic conservation law of the number of particles is then given by
	\begin{equation}\label{eq:numcon1}
	\eta_t(x)-\eta_0(x)+\nabla\cdot J_t(x)= 0.
	\end{equation}
	Using \eqref{eq:mart} in \eqref{eq:numcon1} we get
	\begin{equation}\label{eq:numcon2}
	\eta_t(x)-\eta_0(x)+\int_0^t ds\, \nabla\cdot j_s(x)+\nabla\cdot M_t(x)=0.
	\end{equation}
	
	We can deduce that  at the equilibrium, that is when for a measure $ \mu_N $ on $ \Sigma_N $ the \emph{detailed balance condition} 
	is true, i.e. $
	\mu_N (\eta) c(\eta,\eta^{x,y})=\mu_N (\eta^{x,y}) c(\eta^{x,y},\eta) 
	$  for all $ (x,y)\in E_N $, the average flow $ \mathbb{E}^\eta_{\mu_N} (J_t(x,y)) $ is constantly zero, where  the subscript $ \mu_N $   indicates  the average respect to the equilibrium measure $ \mu_N $.   For a small time interval $ \Delta t $  from \eqref{eq:mart}, \eqref{eq:flow/t} and the detailed balance we have $ \mathbb{E}^\eta_{\mu_N}(J_{\Delta t}(x,y)) \sim \bb E_{\mu_N} (j_\eta(x,y)) {\Delta t}=0 $. Since this is true for any time interval $ \Delta t $ and the current flow $ J_t(x,y) $ is additive we conclude that $ \mathbb{E}^\eta_{\mu_N}(J_{t}(x,y))=0 $. More generally for a \emph{stationary measure} $ \mu_N $, that is $ \mu_N (\mathcal L_N f)=0 $ for any $ f $,   we have that 
	\begin{equation}\label{eq:asJ=asj}
	\bb E^\eta_{\mu_N}  (J_t(x,y))=\bb E_{\mu_N} (j_\eta(x,y)) t.
	\end{equation}

	\begin{remark}\label{re:trasrate}
		For a \emph{translational covariant model}, i.e. $ c_{x,y}(\eta)=c_{x+z,y+z}(\tau_z\eta) $ for any $ z\in V_N $, then the instantaneous current is translational covariant too, namely  it satisfies the symmetry relation
		\begin{equation}\label{eq:jinv}
		j_\eta(x,y)=j_{\tau_z\eta}(x+z,y+z).
		\end{equation}
	\end{remark}

	\section{Energy-mass models}\label{sec:en models}
	In this section we adapt the concepts of the previous section to the continuous case, where we consider  models that   exchange  continuous quantity between sites. The lattice variables are interpreted as energy or mass along the context and the configuration is denoted with $ \xi=\{\xi(x)\}_{x\in V_N} $.
	The first model to be described is  the most famous model of this class, namely the Kipnis-Marchioro-Presutti (KMP) model \cite{KMP}.
	
	\subsection{KMP model and generalization, dual KMP, gaussian model} The \emph{KMP dynamics} is a generalized stochastic lattice gas on which energies  or masses are associated to oscillators at the vertices $ V_N $ .    The stochastic evolution is of the type
	\begin{equation}\label{eq:genKMP}
	\mathcal L_N f(\xi) =\sum_{\{x,y\}\in\mc{E}_N} \mc{L}_{\{x,y\}}f(\xi) \,,\text{ with }
	\end{equation}
	\begin{equation}\label{eq:KMPbulkgen}
	\mc{L}_{\{x,y\}}f(\xi):=\int_{-\xi(y)}^{\xi(x)}\frac{dq}{\xi(x)+\xi(y)}\big[f(\xi-q\left(\varepsilon^x-\varepsilon^y\right))-f(\xi)\big]\,.
	\end{equation}
	where $\varepsilon^x=\left\{\varepsilon^x(y)\right\}_{y\in  V_N}$ is the configuration of mass with all the sites different from $x$ empty and having unitary mass at site $x$,  this means that $\varepsilon^x(y)=\delta_{x,y}$ where $\delta$ is the Kronecker symbol. Formula \eqref{eq:KMPbulkgen} define the model as a uniform distributed random current model. 

	The dynamics \eqref{eq:KMPbulkgen}  can be generalized substituting the uniform distribution on $[-\xi(y),\xi(x)]$ for
	a different probability measure (or just positive measure) $\Gamma_{x,y}^\xi(dq)$, i.e.
	\begin{equation}\label{eq:KMPgeneralized}
	\mc{L}_{\{x,y\}}f(\xi):=\int\Gamma_{x,y}^\xi(dq)[f(\xi-q\left(\varepsilon^x-\varepsilon^y\right))-f(\xi)\big]\,
	\end{equation}
	with the symmetry $\Gamma^\xi_{x,y}(q)=\Gamma^\xi_{y,x}(-q)$ so that 
	\eqref{eq:genKMP} is a sum over unordered edges.
	When considering  a discrete state space, a natural choice for $ \Gamma_{x,y}^\xi(dq) $ in \eqref{eq:KMPgeneralized} is the discrete uniform distribution on the integer points in $[-\xi(y),\xi(x)]$. This means that if $\xi$ is a configuration of mass assuming only integer values then
	\begin{equation}\label{eq:KMPdrate}
	\Gamma_{x,y}^\xi(dq)=\frac{1}{\xi(x)+\xi(y)+1}\sum_{i\in [-\xi(y),\xi(x)]}\delta_i(dq)
	\end{equation}
	where $\delta_i(dq)$ is the delta measure  at $i$ and the sum is over the integer values belonging to the interval.  This is exactly the dual model of KMP \cite{KMP} called also \emph{KMPd}.

	Another interesting model could be the following   \emph{Gaussian model}. In this case the interpretation in terms of mass is missing since the variables can assume also negative values and it could be interpreted as a charge model. The bulk dynamics is defined by a distribution of current having support on all the real line
	\begin{equation}\label{eq:gaussrate}
	\Gamma_{x,y}^\xi(dq)=\frac{1}{\sqrt{2\pi \gamma^2}}e^{-\frac{\left(q-\frac{(\xi(x)-\xi(y))}{2}\right)^2}{2\gamma^2}}dq\,.
	\end{equation}

	\subsection{Weakly asymmetric energy-mass models}\label{subsec:WAm}  We consider dynamics perturbed by a space and time dependent discrete external field $ \mathbb{F} $ defined as follows.
	Let $ F:\mathbb T^n\to\mathbb{R}^n $ be a smooth vector field with components $ F(x)=(F_1,\dots,F_n) $, describing the force acting on the masses of the systems. We associate to $ F $ a discrete vector field $ \mathbb{F}(x,y) $ defined by 
	\begin{equation}\label{eq:disF}
	\mathbb{F}(x,y)=\int_{(x,y)}F(z)\cdot dz,
	\end{equation}
	$(x,y)$ is an oriented edge  and the integral is a line integral
	that corresponds to the work done by the vector field $F$ when a particle moves from $x$ to
	$y$. So we think about $ \mathbb F(x,y) $ as work done per particle. 
	We want to change the  random distribution \eqref{eq:KMPgeneralized}  of the current on each bond according to a perturbed measure $ \Gamma^{\mathbb F} $, that is
	\begin{equation}\label{eq:bulkxyE}
	\mc{L}_{\{x,y\}}^{\mathbb F}f(\xi):=\int\Gamma^{\xi,\mathbb F}_{x,y}(dq)\big[f(\xi-q\left(\varepsilon^x-\varepsilon^y\right))-f(\xi)\big], \,\,\,\,\,\, \Gamma^{\xi,\mathbb F}_{x,y}(dq)=\Gamma^{\xi}_{x,y}(dq)e^{\frac{\mathbb F(x,y)}{2}q}\,.
	\end{equation}
	The effect of an external field is modelled by perturbing the rates and giving a net drift toward a specified direction. When  the size of  $ |y-x| $ is of order $ 1/N $ we obtain a \emph{ weakly asymmetric model},  the discrete vector field \eqref{eq:disF} is of order $ 1/N $ too and  the hydrodynamics is studied  considering a perturbative expansion of $ \Gamma^{\xi,\mathbb F}_{x,y}(dq) $ for the orders that will give  a macroscopic effect. We will see that for weakly asymmetric diffusive models this expansion is necessary up to the order two. If $ F=-\nabla H $ is a gradient vector field, then $ \mathbb F(x,y)=H(x)-H(y) $ and  $\Gamma^{\xi,\mathbb F}_{x,y}(dq)=\Gamma^{\xi}_{x,y}(dq)e^{(H(x)-H(y))q} $.  

	By the symmetry of the measure $\Gamma$ and the antisymmetry of the discrete vector field $\mathbb F$ we have that $\Gamma^{\xi,\mathbb F}_{x,y}(q)=\Gamma^{\xi,\mathbb F}_{y,x}(-q)$ and we can define the generator considering sums over unordered bonds
	\begin{equation}\label{eq:fullKMPgen}
	\mc{L_N}f(\xi)=\sum_{\{x,y\}\in\mc{E}_N}\mc{L}_{\{x,y\}}^{\mathbb F}f(\xi).
	\end{equation}

	\subsection{Instantaneous current of energy-mass systems}\label{subsec:Ie-mc} Here we adapt the definition of instantaneous current to the formalism of the interacting nearest neighbour energy-mass models. The generator is  \eqref{eq:bulkxyE}, the case $ \mathbb{F}=0 $ is treated as a subcase  and we omit the index when the external field is zero. 
	The \e{instantaneous current} for the bulk dynamics is defined as
	\begin{equation}\label{eq:istce-m}
	j^{\mathbb{F}}_\xi(x,y):=\int \Gamma^{\xi,\mathbb{F}}_{x,y}(dq)q\,.
	\end{equation}
	Its interpretation is the rate at which masses-energies cross the bond $ (x,y) $ and it is still a discrete vector field.
	The \e{current flow} now is indicated with $\mathcal J_t(x,y)$ and it is the net total amount of mass-energy that has flown from $x$ to $y$ in the time window $[0,t]$. It can be defined as sum of all the differences between  the mass-energy measured in $ x  $ before and after of every jump on the bond $ \{x,y\} $. Let $ \tau_i $ be the time of the $ i-th $ jump on the bond $ \{x,y\} $ for some $ i $, we write the current flow as follows
	\begin{equation}\label{eq:e-mflow1}
	\mc{J}_t(x,y):=\sum_{\tau_i:\tau_i\in[0,t]} J_{\tau_i}(x,y) \,,
	\end{equation}
	where  $ J_{\tau}(x,y) $ is the \emph{present flow} defined as the current flowing from $ x $ to $ y $   jump time  $ \tau $
	\begin{equation}\label{eq:eq_e-mflow2}
	J_{\tau}(x,y):=\lim_{h\downarrow 0}\xi_{\tau-h}(x)-\lim_{h\downarrow 0}\xi_{\tau+h}(x).
	\end{equation}
	Defining $ J_{\tau}(y,x):=\lim_{h\downarrow 0}\xi_{\tau-h}(y)-\lim_{h\downarrow 0}\xi_{\tau+h}(y) $, the flow $ \mc{J}_{t}(x,y) $ is still an anti-symmetric vector field depending on the trajectory $ \{\xi_t\} $, i.e. $ J_{\tau}(y,x):=-J_{\tau}(x,y) $.
	As in the particles case  $ \mc{J}_t(x,y) $ is a function on the path space, while the instantaneous current $ j^{\mathbb{F}}_\xi(x,y) $ is a function on the configuration space and  the difference 
	\begin{equation}\label{eq:massmart}
	M_t(x,y)=\mc{J}_t(x,y)-\int^t_0 ds\,j^{\mathbb{F}}_{\xi(s)}(x,y).
	\end{equation}
	is a martingale.  
	Repeating what we did in  subsection \ref{ss:Istcurrpart} (with a formalism suitable to energy-mass models)  the instantaneous current \eqref{eq:istce-m} can be obtained as 
	\begin{equation}\label{eq:massflow/t}
	j^{\mathbb{F}}_\xi(x,y):=\us{ t\to0}{\lim}\frac{{\mathbb{E}^{\xi}(\mc{J}_{t}(x,y))}}{t}.
	\end{equation}

	As we did in subsection \ref{ss:Istcurrpart}  from the local discrete conservation of the mass-energy  $ \xi_t(x)-\xi_0(x)+\nabla\cdot \mc{J}_t(x)= 0 $ we have 
	\begin{equation}\label{eq:masscon}
	\xi_t(x)-\xi_0(x)+\int_0^t ds\, \nabla\cdot j^{\mathbb{F}}_{\xi(s)}(x)+\nabla\cdot M_t(x)=0.
	\end{equation}
	The microscopic fluctuation \eqref{eq:massmart} has mean zero and   \eqref{eq:asJ=asj} can be obtained similarly  to conclude that the average currents are zero in the equilibrium case, i.e. when detailed balance conditions (DBC) hold.  
	
	The natural scaling limit for this class of processes is the diffusive one, where the rates have to be multiplied by $N^2$ to get a non trivial scaling limit. So,  instead of \eqref{eq:massmart}, we will consider  in the  macroscopic theory the speeded up martingale   
	$
	M_t(x,y)=\mc{J}_t(x,y)-N^2\int^t_0 ds\,j^\mathbb{F}_{\xi(s)}(x,y).
	$
	
	\begin{example}
		For example the instantaneous current across the edge $(x,y)$ for the KMP process is given by
		\begin{equation}\label{eq:KMPic}
		\int_{-\xi(y)}^{\xi(x)}\frac{qdq}{\xi(x)+\xi(y)}=\frac 12\left(\xi(x)-\xi(y)\right)\,.
		\end{equation}
		This computation shows that the KMP model is of gradient type, see definition \eqref{grgr},  with $h(\xi)=-\frac{\xi(0)}{2}$. Also the KMPd is gradient with respect to the same function $h$.
	\end{example}
	
	\begin{example}
		For the weakly asymmetric KMP in the case of a constant external field $F=E$ in the direction from $x$ to $y$ the discrete field $ \mathbb{F}(x,y)$ is given by $E/N $ on $ \mathbb{T} ^n_\varepsilon$ and 
		\begin{equation*}\label{eq:KMPEgamma}
		\Gamma^{\xi,E}_{x,y}(q)=\frac{1+\frac{E}{N}q}{\xi(x)+\xi(y)}+o(N)
		\end{equation*}
		Then the instantaneous current is
		\begin{eqnarray}\label{eq:KMPEic}
		& & j_{\xi}^E(x,y)=\int_{-\xi(y)}^{\xi(x)}\Gamma^{\xi,E}_{x,y}(q)qdq\nonumber= \\
		& &\frac{2 N}{E(\xi(x)+\xi(y))}\left[e^{\frac {E}{2N}}\xi(x)
		+e^{-\frac {E}{2N}\xi(y)}\xi(y)-2\frac{e^{\frac {E}{2N}\xi(x)}-e^{-\frac {E}{2N}\xi(y)}}{E}\right]\nonumber = \\
		& &=\frac 12\big(\xi(x)-\xi(y)\big)+\frac {E}{N}6\big[\xi(x)^2+\xi(y)^2-\xi(x)\xi(y)\big]+o(N)\,.
		\end{eqnarray}
		The hydrodynamic behavior of the model under the action of an external field in the weakly asymmetric regime, i.e. when the external field $E$ is of order $1/N$, is determined by the first two orders in the expansion \eqref{eq:KMPEic}. In particular any perturbed KMP model having the same expansion as in \eqref{eq:KMPEic} will have the same hydrodynamics.
		
		While for the KMPd model we get
		{\small	{\begin{equation}\label{eq:KMPdEic}
				j_{\xi}^E(x,y)=\frac 12\big(\xi(x)-\xi(y)\big)+\frac{E}{N}{12}\big[2\xi(x)^2+2\xi(y)^2-2\xi(x)\xi(y)+3\xi(x)+3\xi(y)\big]+o(N)\,.
				\end{equation}}}
	\end{example}
	\section{Discrete Hodge decomposition in interacting particle systems}\label{disch}
	
	In the first section we defined the graph $ (V_N,\mathcal E_N) $. Now we  enter into the detail of the discrete mathematics we need to study the geometrical structures of the current.  We consider the case when the graph $(V_N,\mathcal E_N)$ is on $ \mathbb{T}^2_N $. 

	A sequence $(z_0,z_1,\dots ,z_k)$ of elements of $V_N$ such that $(z_i,z_{i+1})\in E_N$, $i=0,\dots k-1$, is called an oriented  path, or simply a \emph{path}. A \emph{cycle} $ C = (z_0,z_1,\dots ,z_k) $ is  a path  with distinct vertices except $z_0=z_k$ and it is defined as an equivalence class modulo cyclic permutations.
	If $C$ is a cycle and there exists an $i$ such that $(x,y)=(z_i,z_{i+1})$ we write $(x,y)\in C$. Likewise if there exists an $i$ such that $x=z_i$ we write $x\in C$. 
	A \emph{discrete vector field} $\phi$ on $(V_N,E_N)$ is a map $\phi:E_N\to \mathbb R$ such that $\phi(x,y)=-\phi(y,x)$.
	A discrete vector field is of \emph{gradient type} if there exists a function $h:V_N\to \mathbb R$ such that
	$\phi(x,y)=[\nabla h](x,y):= h(y)-h(x)$.
	The divergence of a discrete vector field $\phi$ at $x\in V_N$ is defined by
	\begin{equation}\label{eq:disdiv}
	\nabla\cdot \phi(x):=\sum_{y\,:\, \{x,y\}\in \mathcal E_N}\phi(x,y)\,.
	\end{equation}We call $\Lambda^1$ the $|\mathcal E_N|$-dimensional vector space of discrete vector fields. We endow $\Lambda^1$ with the scalar product
	\begin{equation}\label{sc}
	\langle \phi,\psi\rangle:=\frac 12\sum_{(x,y)\in E_N}\phi(x,y)\psi(x,y)\,, \qquad \phi,\psi\in \Lambda^1\,.
	\end{equation}

	We recall briefly the Hodge decomposition for discrete vector fields. We call $\Lambda^0$ the collection of real valued function defined on the set of vertices
	$\Lambda^0:=\{g\, :\, V_N\to \mathbb R\}$.  Finally we call $\Lambda^2$ the vector space of \emph{2-forms} defined on the faces of the lattice $\mathbb Z^2_N$. Let us define this precisely. An oriented face is for example an elementary cycle in the graph of the type $(x,x+e^{(1)}, x+e^{(1)}+e^{(2)},x+e^{(2)},x)$ . In this case we have an \emph{anticlockwise oriented face}. This corresponds geometrically to a square having vertices $x,x+e^{(1)},x+e^{(1)}+e^{(2)},x+e^{(2)}$ plus an orientation in the anticlockwise sense. The same elementary face can be oriented \emph{clockwise} and this corresponds to the elementary cycle $(x,x+e^{(2)}, x+e^{(1)}+e^{(2)},x+e^{(1)},x)$. If $f$ is a given oriented face we denote by $-f$ the oriented face corresponding to the same geometric square but having opposite orientation. A 2-form is a map $\psi$ from the set of oriented faces $F_N$ to $\mathbb R$ that is antisymmetric with respect to the change of orientation, i.e. such that $\psi(-f)=-\psi(f)$. The boundary $\delta\psi$ of  $\psi$ is a discrete vector field defined by
	\begin{equation}\label{defd}
	\delta\psi(e):=\sum_{f\,:\, e\in f}\psi(f)\,.
	\end{equation}
	Since a face is  a cycle the meaning of $e\in f$ has been just discussed above. Note that \eqref{defd} is a discrete orthogonal gradient,  the orthogonal gradient  $ \nabla^\perp f $ of a  smooth function $ f $ is defined as $(-\partial_y f,\partial_x f) $. In higher dimension this a discrete curl.

	By construction $\nabla\cdot \delta\psi=0$ for any $\psi$.
	The 2-dimensional \emph{discrete Hodge decomposition} is written as the direct sum
	\begin{equation}\label{dish2}
	\Lambda^1=\nabla \Lambda^0\oplus \delta\Lambda^2\oplus\Lambda^1_H\,,
	\end{equation}
	where the orthogonality is with respect to the scalar product \eqref{sc}. The discrete vector fields on $\nabla\Lambda^0$ are the gradient ones. The dimension of $\nabla\Lambda^0$ is $N^2-1$. The vector subspace $\delta\Lambda^2$ contains all the discrete vector fields that can be obtained by \eqref{defd} from a given 2-form $\psi$. The dimension of $\delta\Lambda^2$ is $N^2-1$. Elements of $\delta\Lambda^2$ are called \emph{circulations}. The dimension of $\Lambda_H^1$ is simply 2. Discrete vector fields in $\Lambda^1_H$ are called \emph{harmonic}. A basis in $\Lambda^1_H$ is given by the vector fields $\varphi^{(1)}$ and $\varphi^{(2)}$ defined by
	\begin{equation}\label{cenes}
	\varphi^{(i)}\left(x,x+e^{(j)}\right):=\delta_{i,j}\,, \qquad i,j=1,2\,.
	\end{equation}
	Given a vector field $\phi\in \Lambda^1$, we write
	$
	\phi=\phi^\nabla+\phi^\delta+\phi^H
	$
	to denote the unique splitting in the three orthogonal components. This decomposition can be computed as follows. The harmonic part is determined writing $\phi^H=c_1\varphi^{(1)}+c_2\varphi^{(2)}$ with The coefficients $c_i$  determined by
	$
	c_i=\frac{1}{N^2}\sum_{x\in V_N}\phi\left(x,x+e^{(i)}\right)\,.
	$
	To determine the gradient component $\phi^\nabla$ we need to determine a function $h$ for which $\phi^\nabla(x,y)=[\nabla h](x,y)=h(y)-h(x)$. This is done  by taking the divergence on both side of $ \phi=\phi^\nabla+\phi^\delta+\phi^H $ and obtaining the $h$ solving  the discrete Poisson equation $\nabla\cdot\nabla h=\nabla\cdot \phi$. The remaining component $\phi^\delta$ is computed just by difference $\phi^\delta=\phi-\phi^\nabla-\phi^H$.
	We refer to \cite{DC,GV} for a version of discrete calculus with cubic cells and to \cite{DKT} for a version of discrete calculus with simplexes.
	
	\medskip
	
	Given an oriented edge $e$ or an oriented face $f$ we denote respectively by $\mathfrak e$, $\mathfrak f$ the corresponding un-oriented edge and face. 
	Note that  both $f$ and $-f$ are associated with the same un-oriented face $\mathfrak f$.
	Given an oriented edge $e\in  E_N$ of the lattice there is only one anticlockwise oriented face to which
	$e$ belongs that we call it $f^+(e)$. There is also an unique anticlockwise face, that we call $f^-(e)$, such that $e\in -f^-(e)$
	(see Figure 1).
	
	It is useful to define $\tau_{\mathfrak f}$ for an un-oriented face $\mathfrak f$. If  $\mathfrak f = \{x,x+e^{(1)},x+e^{(2)},x+e^{(1)}+e^{(2)}\}$ then we define $\tau_{\mathfrak f}:=\tau_x$.
	For  $\mathfrak e=\{x,x+e^{(i)}\}$ we define $\tau_{\mathfrak e}:=\tau_x$. We  use also the notation $ f^\circlearrowleft $ for an anticlockwise face and $ f^\circlearrowright $ for a clockwise one.

	\begin{figure}[]
		\centering
		\hspace{0.1cm}
		\entrymodifiers={+<0.5ex>[o][F*:black]}
		\xymatrix@C=1.2cm@R=1.2cm{
			{}\ar@{.}[rrrr]\ar@{.}[dddd] & \ar@{.}[dddd] & {}\ar@{.}[dddd] & {}\ar@{.}[dddd] & {}\ar@{.}[dddd]\\
			{}\ar@{.}[rrrr] & {} & {} & {} & {}\\
			{}\ar@{.}[rrrr]_(.53){\textit{\normalsize y}} & {} & {} & {} & {}\\
			{}\ar@{.}[rrrr]_(.53){\textit{\normalsize x}}^(0.38){\textit{\tiny{$f^{+}(e)$}}} ^(0.63){\textit{\tiny{$f^{-}(e)$}}} & {} \ar@{}[ur]|{\textit{\Huge{$\circlearrowleft$}}} & {} \ar@{->}[u]|{\textit{}} & {} \ar@{}[ul]|{\textit{\Huge{$\circlearrowleft$}}} & {}\\
			{} \ar@{->}[u]^(.5){\textit{\normalsize $e^{(2)}$}}  \ar@{->}[r]_(.5){\textit{\normalsize $e^{(1)}$}} \ar@{.}_(-0.05){\text{\normalsize }}[rrrr] & {} & {} & {} & {}
		}
		\caption{\small{ On discrete two dimensional torus, given $(x,y)=e$  we draw the faces $f^{-}(e)$ and $f^{+}(e)$. }}
		\label{fig: due facce}
	\end{figure}
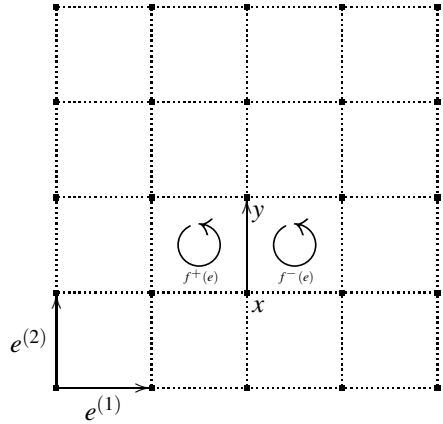

	\subsection{Functional discrete Hodge decomposition and lattice gases}
	\label{fdhhd}
	A relevant notion in the derivation of the hydrodynamic behavior for diffusive particle systems is the definition of gradient particle system. A particle system is called of \emph{gradient type} if there exists a local function $h$ such that
	\begin{equation}\label{grgr}
	j_\eta(x,y)=\tau_yh(\eta)-\tau_x h(\eta)\, \text{ for all } (\eta,(x,y))\in (\Sigma_N, E_N).
	\end{equation}
	The relevance of this notion is on the fact that the proof of the hydrodynamic limit for gradient systems is extremely simplified.
	Moreover for gradient and reversible models it is possible to obtain explicit expressions of the transport coefficients.
	
	Here we show that \eqref{grgr} is a subcase of  general geometrical structures for the instantaneous current. In next sections, we will try to understand the consequences of these structures in the hydrodynamic limits and how it could be useful in understanding the hydrodynamics of non-gradient models. We present a functional Hodge decomposition of
	\emph{translational covariant discrete vector fields}. This means vector fields $ j_\eta(x,y) $ depending on the configuration $\eta\in\Sigma_N$ and satisfying \eqref{eq:jinv}.
	Vector fields of the form \eqref{grgr} play the role of the gradient vector fields. Circulations will also be suitably defined in the context of particle systems.

	\subsection{The one dimensional case}
	
	On the one dimensional torus  $V_N$, we have the following theorem.
	
	\begin{theorem}\label{belteo}
		Let $j_\eta$ be a translational covariant discrete vector field. Then there exists a function $h(\eta)$ and a translational invariant function $C(\eta)$ such that
		\begin{equation}\label{imbr}
		j_\eta(x,x+1)=\tau_{x+1}h(\eta)-\tau_{x}h(\eta)+C(\eta)\,.
		\end{equation}
		The function $C$ is uniquely identified and coincides with
		\begin{equation}\label{prim}
		C(\eta)=\frac 1N\sum_{x\in  V_N} j_\eta(x,x+1)\,.
		\end{equation}
		The function $h$ is uniquely identified up to an arbitrary additive translational invariant function and coincides with
		\begin{equation}\label{formulah}
		h(\eta)=\sum_{x=1}^{N-1}\frac{x}{N}j_\eta(x,x+1)\,.
		\end{equation}
	\end{theorem}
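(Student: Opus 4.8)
The plan is to reduce the statement to a purely combinatorial identity on the one-dimensional torus, treating the configuration $\eta$ as a fixed but arbitrary parameter. For each fixed $\eta$, the object $j_\eta(x,x+1)$, $x\in V_N$, is just a sequence of $N$ real numbers indexed by the cyclic group $\mathbb{Z}/N\mathbb{Z}$. First I would define $C(\eta)$ by the averaging formula \eqref{prim} and $h(\eta)$ by the telescoping-type formula \eqref{formulah}, and then simply verify \eqref{imbr} by direct computation: the difference $\tau_{x+1}h(\eta)-\tau_{x}h(\eta)$ unwinds, using translational covariance $j_{\tau_z\eta}(x+z,y+z)=j_\eta(x,y)$ (Remark~\ref{re:trasrate}), into a single term $j_\eta(x,x+1)$ minus the average $C(\eta)$. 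Concretely, $\tau_x h(\eta)=h(\tau_{-x}\eta)=\sum_{k=1}^{N-1}\tfrac{k}{N}\,j_{\tau_{-x}\eta}(k,k+1)=\sum_{k=1}^{N-1}\tfrac{k}{N}\,j_\eta(k+x,k+x+1)$, so the increment in $x$ produces a "discrete Abel summation" that collapses; the boundary/wrap-around term around the torus is exactly what generates $-C(\eta)$. This is the routine computational core and I would not belabor it.

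Next I would address the two uniqueness claims. For $C$: summing \eqref{imbr} over all $x\in V_N$, the gradient part $\sum_x\big(\tau_{x+1}h(\eta)-\tau_x h(\eta)\big)$ telescopes to zero on the torus, forcing $N\,C(\eta)=\sum_x j_\eta(x,x+1)$, which is \eqref{prim}; hence $C$ is forced. The translational invariance of $C$ is inherited from \eqref{eq:jinv}: replacing $\eta$ by $\tau_z\eta$ and reindexing the sum shows $C(\tau_z\eta)=C(\eta)$. For $h$: if $h_1,h_2$ both satisfy \eqref{imbr} with the same $C$, then $g:=h_1-h_2$ satisfies $\tau_{x+1}g(\eta)=\tau_x g(\eta)$ for all $x$, i.e. $g(\tau_{-x-1}\eta)=g(\tau_{-x}\eta)$ for all $x$ and all $\eta$; since the shifts act transitively this means $g$ is translational invariant, which is precisely the stated ambiguity. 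Finally, to check that the specific $h$ in \eqref{formulah} is itself a legitimate (translational covariant, in the appropriate sense — i.e. a genuine function of $\eta$) choice, one just notes it is a finite linear combination of the covariant quantities $j_\eta(x,x+1)$.

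The only genuinely delicate point is the bookkeeping of indices modulo $N$ in formula \eqref{formulah}: the sum runs from $x=1$ to $N-1$ rather than over all of $V_N$, so one must be careful about which representative of the cyclic class is used, and the verification of \eqref{imbr} will distinguish the "generic" bonds from the bond that wraps around (where the coefficient jumps from $\tfrac{N-1}{N}$ back to $0$, equivalently $\tfrac{N}{N}$). I expect that is where the $C(\eta)$ term is born, and getting the constant exactly right — rather than off by a sign or a factor — is the main obstacle, though it is elementary. An alternative, cleaner route avoiding the explicit formula would be the abstract one: the map $h\mapsto \nabla h$ from $\Lambda^0$ (mod constants) into zero-average discrete vector fields is a linear isomorphism in dimension one because the torus graph $\mathbb{T}^1_N$ is a single cycle, so $\Lambda^1 = \nabla\Lambda^0 \oplus (\text{harmonic})$ with the harmonic part one-dimensional, spanned by the constant-current field; decomposing $j_\eta$ for each $\eta$ and reading off the coefficient gives \eqref{imbr}, and the explicit formulas \eqref{prim}–\eqref{formulah} then follow by inverting the discrete Laplacian on the cycle. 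I would present the direct computation as the main proof and mention this structural viewpoint as the conceptual explanation.
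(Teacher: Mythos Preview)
Your proposal is correct and follows essentially the same approach that the paper sketches: define $C$ and $h$ by the explicit formulas, verify \eqref{imbr} by a direct telescoping/Abel-summation computation exploiting the translational covariance \eqref{eq:jinv}, and deduce the uniqueness statements by summing over the cycle and by comparing two candidate $h$'s. The paper itself only records the idea (``the usual strategy to construct the potential of a gradient discrete vector field plus a subtle use of the translational covariance'') and defers details to \cite{DG1}, so your write-up is in fact more complete than what appears here.
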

	
	\begin{proof}
		
		The basic idea of the theorem is the usual strategy to construct the potential of a gradient discrete vector field plus a subtle use of the translational covariance of the model.
		For the details of the proof see \cite{DG1}.
	\end{proof}
	
	Observe that a one dimensional system of particles is of gradient type (with a possibly not local $h$)
	if and only if $C(\eta)=0$. This corresponds to say that for any fixed configuration $\eta$ then $j_\eta(x,y)$
	is a gradient vector field. This was already observed in \cite{BDGJL,N}.
	Now we compute  the decomposition \eqref{imbr} in some examples. Later we will discuss how it can be related to the hydrodynamics of non-gradient systems.

	\begin{example}\label{exa:diffSEP}
			On the one-dimensional discrete torus, the  symmetric exclusion process with rates $ c_{x,x+1}(\eta)=\eta(x)(1-\eta(x+1))[1+\alpha\eta(x-1)] $ and  $c_{x+1,x}(\eta)=\eta(x+1)(1-\eta(x))[1+\alpha\eta(x-1)]  $, with the constant $ \alpha\in(0,1) $,  is reversible with respect to the Bernoulli measure.  This is a non-gradient systems, expected to have a diffusive scaling limits, where the instantaneous current is given by
			\begin{equation}\label{eq:currdiffSEP}
			j_\eta(x,x+1)= (\eta(x)-\eta(x+1))+(\eta(x)-\eta(x+1))\alpha \eta(x-1).
			\end{equation} 
			Therefore its functional Hodge decomposition \eqref{imbr} is 
			{\small \begin{equation}\label{eq:hdiffSEP}
				h(\eta)=-\eta(0)+ \sum_{x=1}^{N-1}\frac{x}{N}(\eta(x)-\eta(x+1))\alpha \eta(x-1),\,\,\,C(\eta)= \sum_{x\in V_N}\frac{x}{N}(\eta(x)-\eta(x+1))\alpha \eta(x-1).
				\end{equation}}
	\end{example}

	\begin{example}[\emph{The 2-SEP}]\label{exa:2-SEP}
		The model we are considering is the 2-SEP, see its defintion in subsection \ref{ss:expr}. 
		We denote by  $D^\pm_\eta(x,x+1)$ the  local functions associated with the presence on the bond $(x,x+1)$ of what we call respectively a positive or negative discrepancy. More precisely $D^+_\eta(x,x+1)=1$ if $\eta(x)=2$ and $\eta(x+1)=1$ and zero otherwise. We have instead $D^-_\eta(x,x+1)=1$ if $\eta(x+1)=2$ and $\eta(x)=1$ and zero otherwise. We define also $D_\eta:=D^+_\eta-D^-_\eta$.
		The instantaneous current across the edge $(x,x+1)$ associated with the configuration $\eta$ is
		\begin{equation}\label{eq:curr2-SEP}
		j_\eta(x,x+1):=\chi^+(\eta(x))-\chi^+(\eta(x+1))+D_\eta(x,x+1).
		\end{equation}
		For this specific model formulas \eqref{prim} and \eqref{formulah} become\begin{equation}\label{eq:H2-SEP}
		h(\eta)=-\chi^+(\eta(0))+\sum_{x=1}^{N-1}\frac{x}{N}D_\eta(x,x+1),\,\,\,\,\,C(\eta)=\frac 1N\sum_{x\in V_N}D_\eta(x,x+1).
		\end{equation}
	\end{example}
	
	\begin{remark}\label{re:ja}
			Both formulas \eqref{eq:currdiffSEP} and \eqref{eq:curr2-SEP} are written in the form $ j_\eta(x,y)=j^h_\eta(x,y)+j^a_\eta(x,y) $, namely  they are given by the sum of a local gradient current $ j^h_\eta(x,y)=\tau_y h(\eta)- \tau_x h(\eta)$ and a single net contribution $ j^a_\eta (x,y)$ to the harmonic function $ C(\eta) $. We will refer  to $ j^a_\eta (x,y) $ as \emph{single harmonic contribution} on $ (x.y) $. In section \ref{sec:GK}, we will discuss that we think from this way of rewriting the current  it has  to start both the study of an explicit hydrodynamics for the case of non-gradient diffusive model and the computation of Green-Kubo's formula.
	\end{remark}
	
	Our decomposition is motivated by the study of diffusive models where the current can not be written in the gradient form \eqref{grgr}, but it can be computed also in not diffusive models when the hypothesis of theorem \eqref{belteo} hold. For example for the asymmetric simple exclusion process it is as follows.
	
	\begin{example}[\emph{ASEP}]
		The asymmetric simple exclusion process is characterized by the rates $c_{x,x+1}(\eta)=p\eta(x)(1-\eta(x+1))$ and $c_{x,x-1}(\eta)=q\eta(x)(1-\eta(x-1))$.
		Given a configuration of particles $\eta\in \Sigma$, we call $\mathfrak C(\eta)$ the collection of clusters of particles
		that is induced on $V_N$.
		A cluster $c\in \mathfrak C(\eta)$ is a subgraph of $(V_N,\mathcal E_N)$. Two sites $x,y\in V_N$ belong to the same cluster $c$ if
		$\eta(x)=\eta(y)=1$ and there exists an un-oriented path $(z_0,z_1, \dots ,z_k)$ such that $\eta(z_i)=1$ and
		$(z_i,z_{i+1})\in \mathcal E_N$. Given a cluster $c\in \mathfrak C$ we call $\partial^lc$ and $\partial^rc\in V_N$ respectively the first element on the left of the leftmost site of $ c $ and the rightmost  one. The decomposition \eqref{imbr} holds with
		\begin{equation}\label{caldaia}
		h(\eta)=\frac 1N\sum_{c\in \mathfrak C(\eta)}\left[p\partial^rc-q\partial^lc\right],\,\,\,\,\,C(\eta)=\frac{(p-q)\left|\mathfrak C(\eta)\right|}{N}.
		\end{equation}
		where $\left|\mathfrak C(\eta)\right|$ denotes the number of clusters.
	\end{example}

	\subsection{The two dimensional case}

	On the two dimensional torus $V_N$ the decomposition is as follows.
	
	\begin{theorem}\label{belteo2}
		Let $j_\eta$  be a covariant discrete vector field. Then there exist four functions $h,g,C^{(1)}, C^{(2)}$ on configurations of particles such that for an edge of the type $e=(x,x\pm e^{(i)})$ we have
		\begin{equation}\label{hodgefun2}
		j_\eta(e)=\big[\tau_{e^+}h(\eta)-\tau_{e^-}h(\eta)\big]+\big[\tau_{\mathfrak f^+(e)}g(\eta)-\tau_{\mathfrak f^-(e)}g(\eta)\big]\pm C^{(i)}(\eta)\,.
		\end{equation}
		The functions $C^{(i)}=\frac{1}{N^2}\underset{x\in V_N}{\sum} j_\eta (x,x+e^{(i)})$ are translational invariant and uniquely identified. The functions $h$ and $g$ are uniquely identified up to additive arbitrary translational invariant functions.
	\end{theorem}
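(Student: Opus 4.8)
The plan is to mimic the one-dimensional argument of Theorem~\ref{belteo} but to run it inside the full discrete Hodge decomposition \eqref{dish2}, exploiting translational covariance \eqref{eq:jinv} at each stage. First I would fix the harmonic part: for a fixed configuration $\eta$, the discrete vector field $e\mapsto j_\eta(e)$ lives in $\Lambda^1$, so it decomposes as $j_\eta=j_\eta^\nabla+j_\eta^\delta+j_\eta^H$ with $j_\eta^H=c_1(\eta)\varphi^{(1)}+c_2(\eta)\varphi^{(2)}$ and $c_i(\eta)=\frac{1}{N^2}\sum_{x\in V_N}j_\eta(x,x+e^{(i)})$. Translational covariance of $j_\eta$ together with the translational invariance of the harmonic basis $\varphi^{(i)}$ immediately gives $c_i(\tau_z\eta)=c_i(\eta)$, i.e. the $c_i=:C^{(i)}$ are translational invariant functions on $\Sigma_N$; this also fixes the $\pm C^{(i)}$ contribution in \eqref{hodgefun2} on edges $(x,x\pm e^{(i)})$.

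Next I would handle the gradient part $j_\eta^\nabla\in\nabla\Lambda^0$. For each $\eta$ there is a function $H_\eta:V_N\to\mathbb R$, unique up to an additive constant, with $j_\eta^\nabla(x,y)=H_\eta(y)-H_\eta(x)$, obtained by solving the discrete Poisson equation $\nabla\cdot\nabla H_\eta=\nabla\cdot j_\eta$ exactly as described after \eqref{cenes}. The key point is to \emph{normalize} the additive constant so that the map $\eta\mapsto H_\eta$ becomes translational covariant in the sense $H_{\tau_z\eta}(x+z)=H_\eta(x)$; since the Poisson equation and its solution commute with translations up to a constant, one can pin the constant down (for instance by prescribing $\sum_x H_\eta(x)=0$, which is preserved by translations) to achieve this. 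Then defining the local-type function $h(\eta):=H_\eta(0)$ (the analogue of \eqref{formulah}, written via an explicit summation against a fundamental solution of the discrete Laplacian on $\mathbb T^2_N$) yields $\tau_{e^+}h(\eta)-\tau_{e^-}h(\eta)=H_\eta(e^+)-H_\eta(e^-)=j_\eta^\nabla(e)$, which is the first bracket in \eqref{hodgefun2}.

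The circulation part $j_\eta^\delta\in\delta\Lambda^2$ is treated dually on the faces. For each $\eta$ there is a $2$-form $G_\eta$ on $F_N$ with $j_\eta^\delta=\delta G_\eta$, unique up to an additive constant $2$-form, and $G_\eta$ is obtained by solving the analogous Poisson problem for the face-Laplacian $\delta\cdot\delta$ (equivalently, by passing to the dual lattice, where $\delta$ becomes a gradient and one reuses the argument of the previous paragraph). Normalizing the additive constant to make $\eta\mapsto G_\eta$ translational covariant, $G_{\tau_z\eta}(\tau_z f)=G_\eta(f)$, and setting $g(\eta):=G_\eta(f_0)$ for the reference face $f_0$ at the origin, one gets $\tau_{\mathfrak f^+(e)}g(\eta)-\tau_{\mathfrak f^-(e)}g(\eta)=G_\eta(f^+(e))-G_\eta(f^-(e))=\delta G_\eta(e)=j_\eta^\delta(e)$, the second bracket. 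Summing the three pieces gives \eqref{hodgefun2}. Uniqueness of $C^{(i)}$ follows because $\Lambda^1_H$ is $2$-dimensional with basis $\varphi^{(1)},\varphi^{(2)}$ and the three subspaces in \eqref{dish2} are in direct sum, so the harmonic component of $j_\eta$ is determined; uniqueness of $h$ and $g$ up to additive translational invariant functions corresponds exactly to the freedom in the additive constants for $H_\eta$ and $G_\eta$.

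The main obstacle is the simultaneous \emph{covariant} choice of the additive constants in $H_\eta$ and $G_\eta$: one must check that the normalization (e.g. zero mean over vertices, resp. over faces) is both well defined for every $\eta$ and compatible with translations, so that $h$ and $g$ defined by evaluation at the origin genuinely reproduce the gradient and circulation components \emph{and} depend on $\eta$ in a translational covariant way. Concretely this reduces to the invertibility of the discrete Laplacian on mean-zero functions on $\mathbb T^2_N$ (and its face analogue) and to commutation with $\tau_z$, which are standard; for full details I refer to \cite{DG1}.
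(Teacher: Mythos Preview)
Your proposal is correct and follows exactly the strategy the paper sets up: the paper's own proof is simply ``see \cite{DG1}'', and the surrounding discussion (the discrete Hodge decomposition \eqref{dish2}, the explicit formulas for the harmonic coefficients, and the Poisson-equation recipe for the gradient part) is precisely the machinery you invoke. Your outline, including the covariant normalization of the additive constants and the dual treatment of the circulation via the face Laplacian, matches the constructive proof in \cite{DG1} that the paper references.
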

	\begin{proof}
		see \cite{DG1}.
	\end{proof}
	
	We remark that the proof in \cite{DG1} is constructive, that is the function $ h(\eta),g(\eta) $ and $C^{(i)}(\eta)  $ have explicit expressions. In analogy to gradient systems we can say a particle system is  of \emph{circulation type} when  there exist a local function $g$ such that
	\begin{equation}\label{eq:cicrtype}
	j_\eta(e)=\tau_{\mathfrak f^+(e)}g(\eta)-\tau_{\mathfrak f^-(e)}g(\eta),
	\end{equation}
	for all edges $ e\in E_N $ and $ \eta\in\Sigma_N $. We will see that for these systems the hydrodynamics can be treated with the same method of gradient systems. In particular later in section \eqref{sec:slv} we study the scaling limits of systems where gradient  and  circulation dynamics are superposed. Now we introduce some examples of this kind.

	\begin{example}[\emph{A non gradient lattice gas with local decomposition}]\label{exa:SSEP+vor}
		We construct a model of particles satisfying an exclusion rule, with jumps only trough nearest neighbour sites and having a non trivial decomposition of the instantaneous current \eqref{hodgefun2} with $C^{(i)}=0$ and $h$ and $g$ local functions. The functions $h$ and $g$ have to be chosen suitably in such a way that the instantaneous current is always zero inside cluster of particles and empty clusters and has to be always such that $j_\eta(x,y)\geq 0$ when $\eta(x)=1$ and $\eta(y)=0$. A possible choice is the following perturbation of the SEP. We fix $h(\eta)=-\eta(0)$ and $g(\eta)$ with $D(g)=\{0, e^{(1)},e^{(2)}, e^{(1)}+e^{(2)} \}$ (we denote by $0$ the vertex $(0,0)$) defined as follows. We have $g(\eta)=\alpha$ if $\eta(0)=\eta(e^{(1)}+e^{(2)})=1$ and
		$\eta(e^{(1)})=\eta(e^{(2)})=0$. We have also  $g(\eta)=\beta$ if $\eta(0)=\eta(e^{(1)}+e^{(2)})=0$ and
		$\eta(e^{(1)})=\eta(e^{(2)})=1$. The real numbers $\alpha,\beta$ are such that $|\alpha|+|\beta|<1$. For all the remaining configurations we have $g(\eta)=0$.  Since $\Sigma=\{0,1\}$ the rates of jump are uniquely determined by $c_{x,y}(\eta)=\left[j_\eta(x,y)\right]_+$.
	\end{example}

	\begin{example}[\emph{A perturbed zero range dynamics}]\label{exa:ZR+vor}
		A face  
		$\mathfrak f=\{0, e^{(1)}, e^{(2)}, e^{(1)}+e^{(2)}\}$ is occupied in the configuration
		$\eta\in \mathbb N^{V_N}$ if $ \eta(x) \neq 0$ for some $ x\in \mathfrak{f} $.
		Consider two non negative functions $w^\pm$ that are identically zero when the face $\mathfrak{f}$
		is not occupied. Given a positive function $\tilde h:\mathbb N\to \mathbb R^+$, we define the rates of jump as
		\begin{equation}\label{chiu}
		c_{e^-,e^+}(\eta)=\tilde h(\eta(e^-))+\tau_{\mathfrak f^+(e)}w^++\tau_{\mathfrak f^-(e)}w^-\,.
		\end{equation}
		This corresponds to a perturbation of a zero range dynamics such that one particle jumps from one site with $k$ particles with a rate $\tilde h(k)$. The perturbation increases the rates of jump if the jump is on the edge of a full face. The gain depends on the orientation and the effect of different faces is additive. For such a model the instantaneous current
		has a local decomposition \eqref{hodgefun2} with $h(\eta)=-\tilde h(\eta(0))$ and $g(\eta)=w^+(\eta)-w^-(\eta)$.
	\end{example}
	
	The decomposition can be extended to higher dimensions. For the three dimensional case we refer to \cite{thesis}.
	
	\section{Interacting particle systems  with vorticty}\label{sec:Ipsv}

	The models presented in examples \ref{exa:SSEP+vor} and \ref{exa:ZR+vor} are  superposition of a gradient system and a circulation one, see definition \eqref{eq:cicrtype}. This kind of models are not gradient along the classical definition. Here we want to study them from the microscopic point of view and giving some physical motivation why we talk about them as \emph{interacting particle systems with vorticity}, this will become more clear at the end of section \ref{sec:SL}. A better discussion with graphical examples will appear in \cite{DGG}.
	
	Let us consider  the instantaneous current \eqref {eq:istcur} with a decomposition \eqref{hodgefun2} as 
	\begin{equation}\label{eq:localHod2}
	j_\eta(x,y)=[\tau_y h(\eta)-\tau_x h(\eta)]+[\tau_{f^+(x,y)}g(\eta)-\tau_{f^-(x,y)}g(\eta)]=j^h_\eta(x,y)+j^g_\eta(x,y),
	\end{equation}
	with $ h$ and $g $ local functions. We are defining $ j^h_\eta(x,y):=\tau_y h(\eta)-\tau_x h(\eta) $ and $ j^g_\eta(x,y):=\tau_{f^+(x,y)}g(\eta)-\tau_{f^-(x,y)}g(\eta) $.  For example, taking an exclusion process with rates 
	\begin{equation}\label{eq:vortrate}
	c_{x,y}(\eta)=\eta(x)(1-\eta(y))+\eta(x)[\tau_{f^+(x,y)}g(\eta)-\tau_{f^-(x,y)}g(\eta)],
	\end{equation}
	we have  $ j_\eta(x,y)$ as in \eqref{eq:localHod2} with $ h(\eta)=-\eta(0) $, note that example \ref{exa:SSEP+vor} is of this form.
	
	Models with $ j_\eta(x,y) $ as in \eqref{eq:localHod2} can be thought as a generalization of the gradient case $ j_\eta(x,y)=[\tau_y h(\eta)-\tau_x h(\eta)] $, indeed the current is a gradient part plus an \emph{orthogonal gradient part} (discrete bidimensional curl).  Because of the presence of this discrete  curl we use the terminology of "\emph{exclusion process with vorticity}". 
	
	When the rates satisfies \eqref{eq:localHod2}, we will see that the hydrodynamics for the empirical measure \eqref{eq:empmisxi} works exactly as if only the gradient part was present because
	\begin{equation}\label{eq:divg}
	\nabla\cdot j_\eta^g(x)=0,\,\,\,\forall\,\,x\in V_N,
	\end{equation}
	that is the part of the dynamics related to the current $ j^g_\eta(x,y) $  does not give any macroscopic effect to the hydrodynamics of the particles density because its contribution to the microscopic conservation law \eqref{eq:numcon1} is already zero. To observe macroscopically the effect of the discrete curl we have to consider the scaling limits of the current flow  $ J_t(x,y) $ of formula \eqref{eq:currver}.  In section \ref{sec:slv}  we derive  the macroscopic current $ J(\rho) $ that will appear in the hydrodynamics
	$
	\partial_t\rho = \nabla\cdot (-J(\rho))
	$.
	Another physical phenomena of this kind of dynamics \eqref{eq:vortrate} is that  they  are diffusive even if  in general they are  not reversible on the torus $ \mathbb{T}^n_N $, namely this means that at the stationary state there is a non-zero macroscopic current  \eqref{eq:asJ=asj}. For an explicit example see \cite{DGG}.
	
	\section{Scaling limits and transport coefficients of diffusive models}\label{sec:SL}
	
	To derive the hydrodynamics of diffusive systems the rates are multiplied by a factor $ N^2 $ (diffusive time scale) and the space scale $ \varepsilon=1/N $ is considered. The particles jump on the discrete torus $ \mathbb T^n_\varepsilon:=\varepsilon\mathbb Z/\mathbb{Z} $ with mesh of size $ \varepsilon $. When $ N $ goes to infinity $ \mathbb T^n_\varepsilon $ approximates the continuous torus $ \mathbb T^n=[0,1)^n $. A very general class of diffusive systems are models that are reversible with respect to a Gibbs measure when no boundary conditions are imposed.  Reversibility with respect to a measure $ \mu_N $ means $\langle f,\mathcal{L}_N g \rangle_{\mu_N}=\langle \mathcal{L}_N f, g\rangle _{\mu_N} $ for all functions $ f,g $ while stationarity means$  \langle \mathcal L _N f \rangle_{\mu_N}=0 $.  $ \langle \cdot \rangle $ is the expectation on $ \Sigma_N $ respect to $ \mu_N $ and $  \langle \cdot,\cdot  \rangle_{\mu_N} $ is the scalar product respect to $ \mu_N $. We assume $ \mu_N $ to be a grand-canonical measure parametrized by the density $ \rho $, i.e. $ \mathbb{E}_{\mu_N}(\eta(x))=\rho $. For this reason instead of $ \mu_N $ we are going to use the notation $ \mu_N^\rho $.
	
	The macroscopic evolution of the mass is described by the \emph{empirical measure}. This is a positive measure on the continuous torus $\mathbb T^n$ associated to any fixed microscopic configuration $\eta$,  defined as a convex combination of delta measures 
	\begin{equation}\label{eq:empmisxi}
	\pi_N(\eta):=\frac 1N\sum_{x\in V_N}\eta(x)\delta_x\,.
	\end{equation}
	It represents a mass density or an energy density along the interpretation of the model. Integrating a continuous function $f:\mathbb T^n\to \mathbb R$ with respect to $\pi_N(\eta)$ we get
	$
	\int_{\mathbb T^n}f\,d\pi_N(\eta)=\frac 1N\sum_{x\in V_N}f(x)\eta(x)\,.
	$
	In the hydrodynamic scaling limit the empirical measure becomes deterministic and absolutely continuous for suitable initial conditions $\xi_0$ associated to a given density profile $\gamma(x)dx$, in the sense that in probability
	\begin{equation}\label{eq:asc}
	\lim_{N\to +\infty}\int_{\mathbb T^n} f\, d\pi_N(\xi_0)=\int_{\mathbb T^n} f(x)\gamma(x)dx.
	\end{equation}
	Let  $  P_N^{\gamma} $ be the distribution of the Markov chain of the energy-mass/particle interacting model  with initial condition associated to $ \gamma $ as in \eqref{eq:asc} . On $ D([0,T];\mc M^1(\mathbb T^n)) $  the space of trajectories from $ [0,T] $ to the space of positive measure $ \mc M^1(\mathbb T^n) $,  $ \bb P_N^\gamma:= P^{\gamma}_N\circ\pi_N^{-1} $ is the measure induced by the empirical measure.
	We have that $\pi_N(\eta_t)$ is associated to the density profile $\rho(x,t)dx$ where $\rho$ is the weak solution to a diffusive equation
	with initial condition $\gamma$, i.e.
	$
	\bb P^\gamma_N\os{d}{\us{N}{\longrightarrow}}\delta_\rho
	$
	where and  $ \delta_{\rho} $ is the distribution concentrated on the unique weak solution of  a Cauchy problem 
	\begin{equation}\label{eq:drc}
	\left\{
	\begin{array}{l}
	\partial_t\rho=\nabla\cdot(D(\rho)\nabla\rho)\\
	\rho(x,0)=\gamma(x).\\
	\end{array}
	\right.
	\end{equation}
	This is a space-time law of large numbers, where $ D(\sigma) $ is a positive symmetric matrix called \emph{diffusion matrix}.

	\subsection{Qualitative derivation of hydrodynamics }
	
	In this subsection we illustrate the general structure of the proof of the hydrodynamic limit for reversible gradient models  on the torus $ \mathbb{T}^n_\varepsilon $.   
	We use the notion $ \xi $ of section \ref{sec:en models} of energy-mass models because for them we gave some example of weakly asymmetric model and   we want to emphasize that the KMP model is gradient. But the whole scheme apply to particle models in the same way.

	The starting point for the hydrodynamic description is the continuity equation
	\begin{equation}\label{disc-cont}
	\xi_t(x)-\xi_0(x)=-\nabla\cdot \mathcal J_t(x)\,,
	\end{equation}
	where $\mathcal J_t$ has been defined in subsection \ref{subsec:Ie-mc} and $\nabla\cdot $ denotes the discrete divergence defined in \eqref{eq:disdiv}. Using \eqref{eq:massmart} we can rewrite \eqref{disc-cont} as \eqref{eq:masscon} with $ \mathbb{F}=0 $. 
	Multiplying  \eqref{eq:masscon}  by a test function $\psi$, dividing by $N$ and summing over $x$  we obtain
	\begin{equation}\label{gate}
	\int_{\mathbb{T}^n} \psi \,d\pi_N(\xi_t)-\int_{\mathbb{T}^n} \psi\, d\pi_N(\xi_0)=-N\int_0^t\sum_{x\in V_N}\nabla\cdot j_{\xi_s}(x)\,\psi\left(x\right)\,ds+o(1)\,.
	\end{equation}
	The infinitesimal term $ o(1) $ comes from the martingale term. The idea is that the martingales $ M_t(x,y) $ in \eqref{eq:mart} describe some microscopic fluctuations whose additive macroscopic contributions  vanishes as $ N\rightarrow \infty $ as they are mean zero martingales and are almost independent for different bonds. This contribution   can be shown
	to be negligible (in probability) in the limit of large $N$ with the methods of \cite{Gonc,KL99}. Using the gradient condition $ j_{\xi}(x,y)=\tau_x h(\xi)-\tau_y h(\xi)$, for example for the KMP \eqref{eq:KMPbulkgen}  and KMPd \eqref{eq:KMPdrate} we have $ h(\xi)=\frac {\xi(0)} 2 $,  and  performing a double discrete integration by part, up to the infinitesimal term, one has that the right hand side of \eqref{gate} is
	$
	\frac{1}{N}\sum_{x\in V_N}\int_0^t \tau_xh(\xi_{{s}})\left[N^2\left(\psi\left(x+\frac{1}{N}\right)+\psi\left(x-\frac{1}{N}\right)-
	2\psi\Big(x\Big)\right)\right]\,ds\,.
	$
	Considering a $C^2$ test function $\psi$, the term inside squared parenthesis 
	coincides with $\Delta\psi\left(x\right)$ up to an uniformly infinitesimal term.
	
	At this point the main issue in proving hydrodynamic behavior is to prove the validity of a local equilibrium property. Let us define
	\begin{equation}\label{local}
	A(\rho)=\mathbb E_{\mu^\rho_{N}}\left(h (\xi)\right)\,,
	\end{equation}
	where  $\mu^\rho_N $ is the invariant measure characterized by a density profile $ \rho  $, that is $\mathbb{E}_{\mu^\rho_N} (\xi(x))=\rho $.
	The local equilibrium property is explicitly stated through a replacement lemma that shows that (in probability)
	\begin{equation}\label{eq:repla}
	\frac{1}{N}\sum_{x\in V_N}\int_0^t \tau_xh(\xi_{{s}})\Delta\psi\left(x\right)\,ds\simeq
	\frac{1}{N}\sum_{x\in V_N}\int_0^t A\left(\frac{\int_{B_{x}}d\pi_N(\xi_s)}{|B_{x}|}\right)\Delta\psi\left(x\right)\,ds
	\end{equation}
	where $B_{x}$ is a microscopically large but macroscopically small volume around
	the point $x\in V_N$. For a precise formulation of \eqref{eq:repla} see lemma 1.10 and corollary 1.3 respectively  in chapter 5 and  in chapter 6 of \cite{KL99} or chapter 2 in \cite{Gonc}.
	This allows to write (up to infinitesimal corrections) equation \eqref{gate} in terms only of the empirical measure.
	Substituting the r.h.s. of \eqref{eq:repla} in the place of the r.h.s. of \eqref{gate}, we obtain that in the limit of large $N$ the empirical measure $\pi_N(\eta_t)$ converges in weak sense to $\rho(x,t)dx$ satisfying for any $ C^2 $ test function $\psi$
	\begin{equation}\label{widro}
	\int_{\mathbb{T}^n} \psi(x)\rho(x,t)\,dx-\int_{\mathbb{T}^n}\psi(x)\rho(x,0)\,dx=\int_0^t ds\int_{\mathbb{T}^n} A(\rho(x,s))\Delta\psi(x)\,dx\,.
	\end{equation}
	Equation \eqref{widro} is a weak form of \eqref{eq:drc} with diagonal diffusion matrix $ D(\rho) $ with each term in the diagonal equal to
	$
	D(\rho)= \frac{ d A(\rho)}{d \rho}\,.
	$
	We are calling $ D(\rho) $ both the number and the diagonal matrix $ D(\rho)\mathbb I $. For  $h(\xi)=\frac{\xi(0)}{2}$ it is $A(\rho)=\frac{\rho}{2}$. To have an unitary diffusion matrix we multiply all the rates of transition by a factor of 2 and correspondingly the diffusion matrix is the identity matrix.
	Equation \eqref{eq:drc} can be written in the form
	\begin{equation}\label{eq:hdconJ}
	\partial_t\rho+\nabla \cdot (J(\rho))=0, \text{ with } J(\rho)= -D(\rho)\nabla\rho,
	\end{equation}
	where the macroscopic current $ J(\rho) $ associated to  $ \rho  $  satisfies the \emph{Fick's law}.The hydrodynamics for weakly asymmetric diffusive models of subsection \eqref{subsec:WAm} is
	\begin{equation}\label{hydrog}
	\partial_t\rho=\nabla\cdot\left(-J_E(\rho)\right)\text{  with }J_E(\rho):=D(\rho)\nabla\rho-\sigma(\rho)E.
	\end{equation}
	The positive definite matrix $\sigma$ is called the \e{mobility}. For  the weakly asymmetric versions of the KMP and the KMPd, in subsection \ref{subsec:WAm} it is respectively $ \sigma(\rho)=2 \mathbb E_{\mu^\rho_N}\left[g(\eta)\right]= \rho^2$ and $ \rho+\rho^2 $, where respectively $  g(\xi)=\frac 16\big(\xi(0)^2+\xi(1/N)^2-\xi(0)\xi(1/N)\big)$ and $  g(\xi)={12}\big(2\xi(x)^2+2\xi(y)^2-2\xi(x)\xi(y)+3\xi(x)+3\xi(y)\big) $. For a discussion on the computations of these kind of expectations see \cite{BGL}.
	
	The hydrodynamics was derived with periodic boundary conditions but in the bulk  it is   still the same for  a  boundary driven  version of the system, see \cite{ELS96}.

	\section{Scaling limit of an exclusion process with vorticity}\label{sec:slv}
	
	In this section  we want to show how to compute the scaling limit of the macroscopic current $ J(\rho) $ for diffusive models  with vorticity of section \ref{sec:Ipsv}, namely having  the instantaneous current with an expression like \eqref{eq:localHod2}. Here for the purpose of the paper the treatment will be qualitative.  It is the first time that the hydrodynamics of this kind of models is discussed. A complete rigorous treatment of the problem is now being developed in the work in progress\cite{DGG}, where a generalized picture of the Fick's law is under construction. Here we will discuss its main ideas. We consider  a discrete torus of mesh $ \varepsilon=1/N $ but specifically in dimension 2, i.e. $V_N=\mathbb{T}^2_\varepsilon $.

	If the current has an Hodge decomposition \eqref{hodgefun2} only the gradient part  contributes to the hydrodynamics \eqref{gate}, indeed $ \nabla \cdot j_\eta (x)=\nabla \cdot j^h_\eta (x)$ since $ \nabla\cdot j^g_\eta (x)=\nabla\cdot j^H_\eta (x)=0 $ with $ j^H_\eta (x,y)=C^1(\eta)\phi^1(x,y)+C^2(\eta)\phi^2(x,y) $. So if  the gradient part of the current  $ j^h_\eta (x) $ is  diffusive with respect to a local gradient function $ h(\cdot) $,  the hydrodynamics of $ \pi_N(\eta) $ works exactly as if we considered  a model with $ j_\eta (x,y)=j_\eta^h(x,y) $ along the scheme in section \ref{sec:SL}.

Now we want to study the scaling limits of the current $ J(\rho) $ appearing in \eqref{eq:hdconJ}, as model of reference for what we are going to present, the reader should keep in mind the exclusion process of example \ref{exa:SSEP+vor}  but with $ \alpha=\beta $.   More precisely the model  has the rates \eqref{eq:vortrate} with the local function $ g(\eta) $ defined as 
	\begin{equation}\label{eq:defg}
	g(\eta):=\left\{
	\begin{array}{ll}
	\alpha & \textrm{if}\ \eta(0)=\eta(\frac{e^{(1)}}{N}+\frac{e^{(2)}}{N})=1\ \textrm{and}\ \eta(\frac{e^{(1)}}{N})=\eta(\frac{e^{(2)}}{N})=0\,,\\
	\alpha & \textrm{if}\ \eta(\frac{e^{(1)}}{N})=\eta(\frac{e^{(2)}}{N})=1\ \textrm{and}\ \eta(0)=\eta(\frac{e^{(1)}}{N}+\frac{e^{(2)}}{N})=0\,,\\
	0 & \textrm{otherwise}\,,
	\end{array}
	\right.
	\end{equation}
	where $\alpha$ is a real parameter such that $|\alpha|<1$. 
	The informal and intuitive description of the dynamics associated to the rates \eqref{eq:vortrate} is the following. Particles perform a  simple exclusion process, but the faces containing exactly 2 particles located at sites which are not nearest neighbors  let the particles rotate anticlockwise when $\alpha>0$
	and clockwise when $\alpha<0$ with a rate equal to $|\alpha|$.  For this choice of the parameters, the model of example \ref{exa:SSEP+vor} can be proven to be a  non-reversible stationary dynamics with respect to Bernoulli measures of density parameter $ \rho  $. In this section, the language will be general for   a model that is invariant with respect to a measure $ \mu^\rho_N $ parametrized by a density $ \rho $, having the decomposition \eqref{eq:localHod2} and  hydrodynamics for the empirical measure of the form \eqref{eq:hdconJ}, while the results will be made explicit for the toy model \eqref{eq:defg}.

	The scaling limits for the current $ J(\rho) $ it is obtained from the \emph{empirical current measure} $ \mathbb{J}_N$ in the space of the vector signed measure $\mathcal M (\mathbb{T}^2,\mathbb{R}^2) $ defined as  
	
	\begin{equation}\label{eq:empcurrdef}
	\int_{\mathbb{T}^2} H\cdot d\mathbb{J}_N:=\frac{1}{N^2}\underset{\{x,y\}\in\mathcal{E}_N}{\sum}J_t(x,y)\mathbb{H}(x,y) \text{ where  } \mathbb{H}(x,y)=\int^y_x H(z)\cdot dz. 
	\end{equation}
	The family $ (\mathbb{J}_N(t))_{t\in[0,T]} $ belongs to the space $ D([0,T],\mathcal M (\mathbb{T}^2,\mathbb{R}^2)) $ of trajectories from $[0,T]  $ to $ \mathcal M (\mathbb{T}^2,\mathbb{R}^2) $. Calling $ \mathbb{P}_{\mathbb{J}_N}:= P_N^\gamma\circ \mathbb{J}^{-1}_N $ the measure  induced by empirical current measure on $ D([0,T],\mathcal M (\mathbb{T}^2,\mathbb{R}^2) )$, we have that   $ \mathbb{J}_N(t) $ is associated to a vector signed measure $ J(\rho)dx $ in weak sense, that is in probability  for any $ C^1 $ vector field on $ \mathbb{T}^2 $ we will have
	\begin{equation}\label{idroJ}
	\begin{array}{ll}
	&\underset{{N\to+\infty}}{\lim}\int_{\Lambda}H\cdot d\mathbb J_N(t)=\int_{\Lambda}dx \int_0^Tdt J(\rho_t(x))\cdot H(x), \\
	\\ &J(\rho)=-D(\rho)\begin{pmatrix} 1 & 0\\
	 0 & 1 \end{pmatrix}\nabla\rho-D^\perp(\rho)\begin{pmatrix} 0 & -1\\ 1 & 0 \end{pmatrix}\nabla\rho\,,
	\end{array}
	\end{equation}
	where $ D(\rho) $ and $ D^\perp(\rho) $ are two real coefficients depending on $ \rho $, $\rho_t(x)$ is the solution of the Cauchy problem \eqref{eq:drc} and $ J(\rho(0)) $ is equal 0 by definition.  This means that $ \mathbb{P}_{\mathbb{J}_N} \os{d}{\us{N}{\longrightarrow}}\delta_{J(\rho)} $ where  $ \delta_{J(\rho)} $ is the distribution concentrated on the measure $ J(\rho)dx $ that we have just described.  For the model \eqref{eq:defg} we will show that $ D(\rho)=1 $ and $ D^\perp(\rho)= \frac{d}{d\rho} (2\alpha(\rho-\rho^2)^2) $.
	The derivation of the hydrodynamics starts from the martingale
	\begin{equation}\label{eq:martTotcorr}
	M(t)= \frac {1}{N^d}\sum_{\{x,y\}\in\mathcal{E}_N}J_t(x,y)\mathbb H (x,y) - N^{2-d} \int ^t_0 ds \,\, \sum_{\{x,y\}\in\mathcal{E}_N} j_{\eta_s}(x,y) \mathbb{H}(x,y),
	\end{equation} 
	where $ N^2 $ is the diffusive scaling and the factor $N^{-d}$  it is a normalization.
	By the antisymmetry of the  discrete vector fields there is no ambiguity in this definition. Therefore 
	$
	\frac {1}{N^d}\sum_{\{x,y\}\in\mathcal{E}_N}J_t(x,y)\mathbb H (x,y)$$=N^{2-d} \int ^t_0 ds \,\, \sum_{\{x,y\}\in\mathcal{E}_N} j_{\eta_s}(x,y) \mathbb{H}(x,y)  + o(1),
	$
where $ o(1) $ is  a negligible (in probability) martingale term for large $ N $, for which holds a discussion like that one about the martingale in \eqref{gate}. From   \eqref{eq:localHod2} 
	{\small \begin{equation}\label{darimpiaz}
	\sum_{\{x,y\}\in\mathcal{E}_N} j_{\eta_s}(x,y) \mathbb{H}(x,y)=\int_0^t\left[\sum_{x\in V_N}\tau_x h(\eta)(x)\nabla\cdot \mathbb H(x) +\sum_{\mathfrak f\in \mathcal F_N}\tau_{\mathfrak f}g(\eta_s)\sum_{(x,y)\in f^\circlearrowleft}\mathbb{H}(x,y)\right],
	\end{equation}}
where $ N^2\nabla\cdot\mathbb{H}(x)=\nabla\cdot H(x)+o(1/N) $ and $ N^2\sum_{(x,y)\in f^\circlearrowleft}\mathbb{H}(x,y)=\nabla^{\perp}\cdot H(z)+o(1/N) $.
	In the above formula $z$ is any point belonging to the face,
	while given a $C^1$ vector field $H=(H_1,H_2)$ we used the notation
	$\nabla^\perp\cdot H(z):=-\partial_yH_1(z)+\partial_xH_2(z)$.
	When $ N $ is diverging, we assume the local equilibrium hypothesis with respect to the grand-canonical measure $ \mu_N^\rho $ to prove with a replacement lemma as discussed in section\ref{sec:SL}, this means that  \eqref{darimpiaz} converges (in probability) to
	\begin{equation}\label{wHDJ}
	\int_0^t ds\int_{\mathbb T^2} dx\,\left[a(\rho(x,s))\nabla\cdot H(x)+a^\perp(\rho(x,s))\nabla^\perp\cdot H(x)\right],
	\end{equation}
	applying 
	the replacement lemmas $ a(\rho)=$$\mathbb E_{\mu^\rho_N}[h(\eta)] $ and $ a^\perp(\rho)=$$\mathbb E_{\mu^\rho_N}[g(\eta)]  $,  for the model of reference \eqref{eq:defg} its $ a(\rho)=\rho $ and $ a^\perp(\rho)= 2\alpha[\rho(1-\rho)]^2 $.
	Formula \eqref{wHDJ} is a weak form of $\int_0^tds\int_\Lambda J(\rho)\cdot H\,dx$ with
	\begin{equation}\label{tipJ}
	J(\rho)=-\nabla a(\rho)-\nabla^\perp a^\perp(\rho)=-D(\rho)\nabla\rho - D^\perp(\rho)\nabla^\perp\rho,
	\end{equation}
	where  $ D(\rho)=d (a(\rho))/d\rho $,  $ D^\perp(\rho)= d(a^\perp(\rho))/d\rho  $ and  $ \nabla^\perp f:=(-\partial_y f,\partial_x f) $.  As we expected from the microscopic argument \eqref{eq:divg} we have 
	$
	\nabla\cdot (-D(\rho)\nabla\rho - D^\perp(\rho)\nabla^\perp\rho)=\nabla\cdot(-D(\rho)\nabla\rho),
	$
	hence the hydrodynamics is left unchanged with respect to the usual gradient case.  For the model \eqref{eq:defg} we obtain $ D(\rho)=1 $  and $ D^\perp(\rho)= \frac{d}{d\rho} \ (2\alpha(\rho-\rho^2)^2)$ as anticipated above. 
	Hence formula \eqref{tipJ} can be rewritten 
	in the form
	\begin{equation}\label{tipJ2}
	J(\rho)=-\left[D(\rho)\begin{pmatrix} 1 & 0\\ 0 & 1 \end{pmatrix}+D^\perp(\rho)\begin{pmatrix} 0 & -1\\ 1 & 0 \end{pmatrix}\right]\nabla\rho.
	\end{equation}
	From the computations presented here and other considerations in \cite{DGG} we think that  the Fick's law \eqref{eq:hdconJ} for particle models has to be replaced by the general picture
	\begin{equation}\label{eq:JasymD}
	J(\rho)=-\mathcal D(\rho)\nabla\rho, 
	\end{equation}
	where the diffusion matrix $ \mathcal D (\rho)$ is  positive but not necessarily symmetric and  in general with respect to \eqref{tipJ2} the terms on the same diagonal can have different coefficients.
	
	\begin{remark}
		An important question is to understand if there exist models with vorticity that are also reversible. At the present stage, we are not able neither to find reversible models of this kind neither to prove that this property is a genuine microscopic non equilibrium property, if this is case, it looks that typically they will be diffusive models with a non-zero average microscopic current at the stationary state. 
	\end{remark}

	\section{Green-Kubo's formula and perspectives in non-gradient particles systems}\label{sec:GK}
	
	Scaling limits of non gradient particles systems can be proved to be diffusive with the methods developed in \cite{Q,Var94} if the spectral gap of the generator satisfies suitable conditions  \cite{KL99}, but   even in one dimension when the instantaneous current is not gradient there are no explicit PDEs. For what we know, the only case where there is an explicit  PDE is \cite{Wick}, where the author consider a spatial inhomogeneous simple symmetric exclusion process where particles jump with two different constant along an  edge is even or odd. The model of Wick is translational covariant, see \eqref{imbr},  with respect to translations on two sites instead of one. To look at  Wick model into the context of this paper and in particular of this section, one has to generalize the decomposition \eqref{imbr} for translational covariant models on two sites, this is done considering a  renormalized current on two sites and rewriting the decomposition for it. Then, taking a lattice with a even number of sites, the discrete hydrodynamics \eqref{gate} can be written with respect to this current. 
	
	 We  start to explore if the decompositions \eqref{imbr} and \eqref{hodgefun2} can tell something about this problem. Let us consider exclusion processes in one dimension  on the torus  with nearest neighbours interaction, reversible with respect to $ \mu^\rho_N $  and non gradient, a case is example \ref{exa:diffSEP}. For these models the hydrodynamics is expected to be diffusive with the diffusion coefficient  having the following   variational expression 
	\begin{equation}\label{eq:var}
	D(\rho)=\frac{1}{2\chi(\rho)}\,\underset{f}{\inf}\,\,\mathbb{E}_{\mu^\rho_N}\big[c_{0,1}(\eta)  \big(  (\eta(0)-\eta^{0,1}(0))+   \sum_{x\in V_N} (S_{0,1}  \tau_x )f(\eta)\big) ^2 \, \big],
	\end{equation}
	where $ \chi(\rho) $ is the mobility $ \mathbb{E}_{\mu^\rho_N}(\eta^2(0))-\rho^2 $, $ (S_{x,y} f)= f(\eta^{x,y})-f(\eta) $ and the $\inf $ is  over all functions $ f:\Sigma_N\rightarrow \mathbb{R}$.  
	This  is discussed in chapter 2 of part 2 in \cite{Spohn} and has been proved for the 2-SEP in chapter 7 of \cite{KL99}. The variational formula \eqref{eq:var} is proved to be equal to the  \emph{Green-Kubo's} formula  for interacting particles systems
	\begin{equation}\label{eq:GK}
	D(\rho)=\frac{1}{2\chi(\rho)}\big[ \mathbb{E}_{\mu_N}(c_{0,1}(\eta)) - 2\int^{+\infty}_0\mathbb{E}_{\mu_N} (j_\eta (0,1)e^{\mathcal{L}_Nt}\tau_x j_\eta(0,1))\big],
	\end{equation}
	where $ e^{\mathcal{L}_Nt} $ is the evolution operator  of the Markov process.  We consider translational covariant rates (remark \ref{re:trasrate}) to have the decomposition \eqref{imbr}, plugging this one   in \eqref{eq:GK} we find
	\begin{equation}\label{eq:GK2}
	D(\rho)=\frac{1}{2\chi(\rho)}\big[ \mathbb{E}_{\mu_N}(c_{0,1}(\eta)) - 2 N\,\mathbb{E}_{\mu_N} (C(\eta)\mathcal{L}_N^{-1} C(\eta))\big],
	\end{equation}
	where $ \mathcal{L}_N^{-1} $ is the generalized inverse operator of $ \mathcal{L}_N $  (for $ f(\eta) $ constant function $ \mathcal{L}_N f(\eta)=0 $), for this definition see \cite{Hunter}.  Formula \eqref{eq:GK2} tells us that just the harmonic part of the current $ C(\eta) $ contributes to the second term of the Green-Kubo's formula and  for gradient systems we have   $ D(\rho)=\frac{1}{2\chi(\rho)} \mathbb{E}_{\mu_N}(c_{0,1}(\eta)) $ even if the $ h  $ is not local, admitting that such models exist.  Expression \eqref{eq:GK2} has an equivalent variational formulation with a minimizer that is computable in principle. The term $ \mathbb{E}_{\mu^\rho_N} (C(\eta)\mathcal{L}_N^{-1} C(\eta)) $ can be seen as the scalar  product $ \langle C(\eta),\mathcal{L}_N^{-1} C(\eta)\rangle_{\mu^\rho_N} $, where $ \langle f, g \rangle_{\mu^\rho_N}= \sum_{\eta} f(\eta)g (\eta)\mu^\rho_N(\eta) $. Since $ \mathcal{L}_N $ is symmetric with respect to this scalar product we have 
	\begin{equation}\label{eq:var2}
	\langle C(\eta),\mathcal{L}_N^{-1} C(\eta)\rangle_{\mu^\rho_N}=\underset{f}{\inf}\left\{ - \langle f, \mathcal{L}_N f\rangle_{\mu^\rho_N} -2 \langle C(\eta), f \rangle_{\mu^\rho_N}\right\}
	\end{equation}
	where the minimizer is over all function $  f:\Sigma_N\rightarrow \mathbb{R} $  and a solution is given by
	\begin{equation}\label{eq:minimizer}
	\mathcal{L}_N f(\eta)= - C(\eta) \text{ for all } \eta\in \Sigma_N.
	\end{equation}
	   The solution \eqref{eq:minimizer} is well posed since $ C(\eta) $ is orthogonal to the eigenspace of eigenvalue zero. This minimizer  looks to us more simple to solve than the one of the expression in \cite{Spohn}, for example interpreting the model of Wick as explained at the beginning of the section this minimum can be solved within the framework we are going to explain in next paragraphs. We think that  the solution of this minimizer  is equivalent to rewrite the discrete hydrodynamics in a form such that the only macroscopic relevant terms  are  reduced to the usual case of section \ref{sec:SL} of gradient systems. In some special non-gradient cases (as Wick \cite{Wick}) we expect a simplified scheme,  that is an exact case of  a more general scheme  briefly described at the end.
	   
	   The idea starts from the observation that a natural attempt  to solve \eqref{eq:minimizer} is to look for  a $ f(\eta) $ of the form  
	   	\begin{equation}\label{eq:sumf}
	   	f(\eta)=\underset{x\in V_N}{\sum}\tau_x g(\eta),
	   	\end{equation} 
	   	where $ g(\eta) $ is a local function. Note that the left-hand side of \eqref{eq:minimizer} is invariant by translation as it has to be.  Remark \ref{re:ja} gives  a connection between the minimizer and the conservation law leading to the  hydrodynamics, there   we discussed that in reversible non-gradient model the current can be rewritten in the form $  j_\eta(x,y)=j^h_\eta(x,y)+j^a_\eta(x,y)  $, where the single harmonic contributions are such that $ C(\eta)=\frac 1N\underset{x\in V_N}{\sum} j_\eta(x,x+1)=\frac 1N\underset{x\in V_N}{\sum} j^a_\eta(x,x+1) $. So to the part of the current denoted $ j^h_\eta(x,y) $ we can apply   the scheme of section \ref{sec:SL} with respect to a gradient function $ h(\eta) $, while it is not possible for the part $ j^a_\eta(x,y)$ . But if we are able  to find a local function $\tilde{g}(\eta) $ such that 
	   	\begin{equation}\label{eq:Wick}
	   	\mathcal{L}_N \tilde{g}(\eta)= j^a_\eta(0,1)+\tau \tilde{h}(\eta)-\tilde{h}(\eta),
	   	\end{equation} 
	   where $ \tilde{h}(\eta) $ is another local function,  then we are done both with the solution \eqref{eq:sumf} and the discrete form of the hydrodynamics \eqref{gate}.   Indeed respectively taking $ f(\eta) = \underset{x\in V_N}{\sum}\tau_x g(\eta)$ with $ g(\eta)=-\tilde{g}(\eta) $ we solve \eqref{eq:minimizer} and with the replacements \eqref{eq:Wick} of the harmonic contributions we will be able to treat the hydrodynamics. This is because considering the translations of relation \eqref{eq:Wick}, in the discrete hydrodynamics \eqref{gate} will contribute only the local function ${h'}(\eta)=-\tilde{h}(\eta) $ since with the local equilibrium hypothesis \eqref{local}  the terms $ \mathbb{E}_{\mu_N^\rho}(\mathcal{L_N}\tau_x g(\eta_t)$) will be negligible in the scaling limit as they are time derivatives and from the time integral they will give a contribution of order $ O(1/N^2) $ each one. At the end, the hydrodynamics will follow section \ref{sec:SL} with respect to the local function $ H(\eta):=h(\eta)+h'(\eta) $.
   
   \vspace{1cm}
   
   In the non-local decomposition \eqref{imbr} the part $ C(\eta) $ is divergence free and therefore will not appear in the hydrodynamics \eqref{gate}. We expect that writing this last one with respect to the non-local $ h^a(\eta) $ of the Hodge decomposition $ j^a_\eta (x,y)=\tau_y h^a(\eta)-\tau_x h^a(\eta)+C(\eta) $,   doing  the  substitution \eqref{eq:Wick} in  $ h^a = \sum_{x=1}^{N-1}\frac{x}{N}j^a_\eta(x,x+1)\,$, with proper cancellations the hydrodynamics will still reduce to the one related to $ H(\eta) $.

   \vspace{1cm}In general solving \eqref{eq:minimizer} with an $ f(\eta) $ of the form \eqref{eq:sumf} with the property \eqref{eq:Wick} will be not possible. But   for models like example \ref{exa:diffSEP}, we expect  a generalization of this case  where \eqref{eq:minimizer} is solved unless of  (non-local) gradients  (which will not contribute in the computation of the scalar products in  \eqref{eq:var2}) with a solution as \eqref{eq:sumf} where $ g(\eta) $ satisfies  \eqref{eq:Wick} unless  extra terms  on the right-hand side that in probabilistic sense will be of order $ o(1/N) $ and will not contribute to hydrodynamics.
   
   Similarly to \eqref{eq:GK2},  in dimension higher than one,  the extra terms for non gradient systems will come only from the harmonic part,  i.e. $ C^1(\eta)\phi^1 (x,y) $ and  $  C^1(\eta)\phi^1 (x,y)$ in dimension two.

	\begin{acknowledgement}
		The author thanks prof. Davide Gabrielli of Universit\'{a} degli studi di L'Aquila for  a plenty of discussions and ideas  in the paper. This project has received funding from the European Research Council (ERC) under  the European Union's Horizon 2020 research and innovative programme (grant agreement   No 715734).
	\end{acknowledgement}
	
%
%

\begin{thebibliography}{99.}%





\bibitem{BGL} L. Bertini, D. Gabrielli, J.L. Lebowitz, \emph{Large Deviations for a Stochastic Model of Heat}
Flow Journ. Stat. Phys {\bf 121}, 843-885 (2005).

\bibitem{BDGJL} L. Bertini, A. De Sole, D. Gabrielli, G. Jona-Lasinio, C. Landim, \emph{Stochastic inter-
	acting particle systems out of equilibrium}, J. Stat. Mech, P07014 (2007).

\bibitem{thesis} L. De Carlo, \emph{Microscopic and macroscopic perspectives on stationary nonequilibrium states} arXiv:1906.05763 (2019)


\bibitem{DC} L. De Carlo, \emph{Discrete calculus with cubic cells on discrete manifolds},  	arXiv:1906.07054  (2019).

\bibitem{DG1} L. De Carlo, D. Gabrielli, \emph{Gibbsian stationary nonequilibrium states} J Stat Phys (2017) {\bf 168} (2017),
1191-1222.

\bibitem{DG2} L. De Carlo, D. Gabrielli, \emph{Totally Asymmetric Limit for Models of Heat Conduction} J  Stat Phys  {\bf 168}, 508-534 (2017).

\bibitem{DGG} L. De Carlo, D. Gabrielli, P. Gon\c{c}alves, \emph{Scaling limit of an exclusion process with vorticity}, in preparation.



\bibitem{DKT} M. Desbrun, E. Kanso, Y. Tong,  \emph{Discrete Differential Forms for Computational Modeling},  in: Bobenko A.I., Sullivan J.M., Sch\"order P., Ziegler G.M. (eds) Discrete Differential Geometry. Oberwolfach Seminars, vol {\bf 38}. Birkhu\"aser Basel (2008).

\bibitem{ELS96}  G.L. Eyink, J.L. Lebowitz, H. Spohn  \emph{Hydrodynamics and fluctuations outside of local equili6brium: driven diffusive systems} J. Statist. Phys. {\bf 83}, no. 3-4, 385?472 (1986).



\bibitem{GV} D. Gabrielli, C. Valente  \emph{ Which random walks are cyclic?}
ALEA, Lat. Am. J. Probab. Math. Stat. {\bf 9}, 231-267 (2012)


\bibitem{Gonc} P. Gon\c{c}alves, \emph{Equilibrium Fluctuations for Totally Asymmetric Particle Systems: exclusion and zero-range processes}, VDM Verlag Dr. M\"u ller  (2010).

\bibitem{Hunter} J. Hunter \emph{Generalized inverses of Markovian kernels in
	terms of properties of the Markov chain}, Linear Algebra and its Applications {\bf 447} (2014) 38-55.


\bibitem{KMP}C. Kipnis, C. Marchioro, E. Presutti \emph{Heat flow in an exactly solvable model} J. Stat.
Phys. 27, 65 (1982).




\bibitem{KL99}
C. Kipnis and C. Landim, {\it Scaling Limits of Interacting Particle Systems} (Springer, New York,  1999).



\bibitem{N} Y. Nagahata \emph{Regularity of the diffusion coefficient matrix for
	generalized exclusion process} Stochastic Processes and their Applications, {\bf 116},  957-982 (2016)


\bibitem{Q} J. Quastel : Diffusion of color in the simple exclusion process. Comm. Pure Appl. Math. {\bf XLV}, 623-679 (1992).

\bibitem{Spit70} F. Spitzer, \emph{Interaction of Markov processes}, Adv. Math. 5, 246-290 (1970).


\bibitem{Spohn}
H. Spohn, \emph{Large Scale Dynamics of Interacting Particles}
(Springer-Verlag, New York, 1991).

\bibitem{Var94}
S.R.S. Varadhan,  \emph{ Nonlineardiffusion Iimit for a system with nearest neighbor interactions II}. Asymptotic Problems in Probability
Theory: Stochastic Models and Diffusion on Fractals,  {\bf 283}, 75-128 (1994a).

\bibitem{Wick}W.D. Wick \emph{Hydrodynamic Limit of a Nongradient
	Interacting Particle Process}, J. of Stat. Phys. {\bf 54},  873-892 (1989).








\end{thebibliography}
%

\end{document}